\definecolor{darkgreen}{rgb}{0.0,0.7,0.0}
\newenvironment{tw}{\noindent\color{darkgreen} TW:}{}
\newenvironment{mk}{\noindent\color{blue} MK:} {}
\newcommand{\refthm}[1]{Theorem~\ref{#1}\xspace}
\newcommand{\reflem}[1]{Lemma~\ref{#1}\xspace}
\newcommand{\refprop}[1]{Proposition~\ref{#1}\xspace}
\newcommand{\set}[2]{\left\{#1\mathrel{\left|\vphantom{#1}\vphantom{#2}\right.}#2\right\}}
\newcommand{\oneset}[1]{\left\{\mathinner{#1}\right\}}
\let\iff=\undefined
\let\implies=\undefined
\newcommand{\iff}       {\mathrel{\Leftrightarrow}}
\newcommand{\implies}   {\text{$\;\Rightarrow\;$}}
\newcommand{\B}{\mathbb{B}}
\newcommand{\logicfont}[1]{\mathrm{#1}}
\newcommand{\FO}{\logicfont{FO}}
\newcommand{\Sigmatwo}{\mbox{\ensuremath{\Sigma_2}}}
\newcommand{\ltrue}      {\ensuremath{\mathord{\top}}\xspace}
\renewcommand{\lor}{\mathrel{\vee}}
\renewcommand{\land}{\mathrel{\wedge}}
\newcommand{\Synt}{\mathrm{Synt}}
\newcommand{\greenfont}[1] {\ensuremath{\mathcal{#1}}}
\newcommand{\greenR} {\greenfont{R}\xspace}
\newcommand{\Requiv} {\mathrel{\greenR}}
\let\alph=\undefined
\newcommand{\alph}{{\mathrm{alph}}}
\newcommand{\im}{{\mathrm{im}}}
\renewcommand{\phi}{\varphi}
\newcommand{\mediumSize}[1]{\fontsize{9pt}{12pt}\selectfont #1\normalsize}
\newcommand{\mediumFont}[1]{\normalfont\mediumSize{#1}}
\newcommand{\malcev}%
  {\mathop{\text{\normalsize{\raisebox{0.3mm}{\textcircled{\raisebox{0.1mm}{\mediumFont{m}}}}}}}}
\newcommand{\varietyFont}[1]{\mathrm{\mathbf{#1}}}
\newcommand{\Vthreehalf}{\varietyFont{V_{3/2}}}
\newcommand{\Vtwo}{\varietyFont{V_2}}
 \newtheorem{theorem}{Theorem}%[section]
 \newtheorem{proposition}[theorem]{Proposition}
 \newtheorem{lemma}[theorem]{Lemma}
 \theoremstyle{remark}
 \newtheorem{remark}[theorem]{Remark}
 \newtheorem{example}[theorem]{Example}
\setlist{itemsep=3pt,parsep=3pt,topsep=4pt}
\title{Level Two of the Quantifier Alternation Hierarchy over Infinite Words}
\begin{document}
\author{Manfred Kuf\-leitner}
\author{Tobias Walter}
\affil{FMI, Universit\"at Stuttgart, Germany,
	\{kufleitner,walter\}@fmi.uni-stuttgart.de}
\date{September 21, 2015\thanks{This work was supported by the German Research Foundation (DFG) under grants DI 435/5-2 and DI 435/6-1.}}
%\institute{FMI, Universit\"at Stuttgart, Germany\\
%		\{kufleitner,walter\}@fmi.uni-stuttgart.de}

\maketitle

\begin{abstract}
The study of various decision problems for logic fragments has a long history in computer science. This paper is on the membership problem for a fragment of first-order logic over infinite words; the membership problem asks for a given language whether it is definable in some fixed fragment.
The alphabetic topology was introduced as part of an effective characterization of the fragment~$\Sigma_2$ over infinite words. Here, $\Sigma_2$ consists of the first-order formulas with two blocks of quantifiers, starting with an existential quantifier. Its Boolean closure is $\mathbb{B}\Sigma_2$. Our first main result is an effective characterization of the Boolean closure of the alphabetic topology, that is, given an $\omega$-regular language $L$, it is decidable whether $L$ is a Boolean combination of open sets in the alphabetic topology. This is then used for transferring Place and Zeitoun's recent decidability result for $\mathbb{B}\Sigma_2$ from finite to infinite words.
\end{abstract}

\section{Introduction}

Over finite words, the connection between finite monoids and regular languages is highly successful for studying logic fragments, see e.g.~\cite{dgk08ijfcs:short,str94:short}. Over infinite words, the algebraic approach uses infinite repetitions. Not every logic fragment can express whether some definable property $P$ occurs infinitely often. For instance, the usual approach for saying that $P$ occurs infinitely often is as follows: for every position $x$ there is a position $y>x$ satisfying $P(y)$. Similarly, $P$ occurs only finitely often if there is a position $x$ such that all positions $y>x$ satisfy $\neg P(y)$. Each of these formulas requires (at least) one additional change of quantifiers, which not all fragments can provide. It turns out that topology is a very useful tool for restricting the infinite behaviour of the algebraic approach accordingly, see e.g.~\cite{dk11tocs,kkl11stacs:short,pp04:short,wilke93stacs}. In particular, the combination of algebra and topology is convenient for the study of languages in $\Gamma^\infty$, the set of finite and infinite words over the alphabet~$\Gamma$.
In this paper, a \emph{regular language} is a regular subset of $\Gamma^\infty$.

%%% TODO
\enlargethispage{-0.9\baselineskip}

Topological ideas have a long history in the study of $\omega$-regular languages. The Cantor topology is the most famous example in this context. 
We write $G$ for the Cantor-open sets and $F$ for the closed sets. 
The open sets in $G$ are the languages of the form $W\Gamma^\infty$ for $W \subseteq \Gamma^*$.
If $X$ is a class of languages, then $X_\delta$ consists of the countable intersections of languages in $X$ and $X_\sigma$ are the countable unions; moreover, we write $\mathbb{B}X$ for the Boolean closure of $X$. Since $F$ contains the complements of languages in $G$, we have $\mathbb{B}F = \mathbb{B}G$. The Borel hierarchy is defined by iterating the operations $X \mapsto X_\delta$ and $X \mapsto X_\sigma$. The Borel hierarchy over the Cantor topology has many appearances in the context of $\omega$-regular languages. For instance, an $\omega$-regular language is deterministic if and only if it is in $G_\delta$, see~\cite{Landweber69,tho90handbook:short}. By McNaughton's Theorem~\cite{mcnaughton66}, every $\omega$-regular language is in $\mathbb{B}(G_\delta) = \mathbb{B}(F_\sigma)$. The inclusion $\mathbb{B} G \subset G_\delta \cap F_\sigma$ is strict, but the $\omega$-regular languages in $\mathbb{B} G$ and $G_\delta \cap F_\sigma$ coincide~\cite{sw74eik:short}.

\vspace*{-\baselineskip}

\begin{center}
\begin{tikzpicture}
\draw (0,1) node (G) {$G$};
\draw (0,-1) node (F) {$F$};
\draw (1.5,0) node (BG) {$\mathbb{B}G = \mathbb{B}F$};
\draw (4,0) node (GdFs) {$G_\delta \cap F_\sigma$};
\draw (5.7,1) node (Gd) {$G_\delta$};
\draw (5.7,-1) node (Fs) {$F_\sigma$};
\draw (7.5,0) node (BGd) {$\mathbb{B}(G_\delta) = \mathbb{B}(F_\sigma)$};

\draw[thick] (G) -- (BG) -- (GdFs) -- (Gd) -- (BGd);
\draw[thick] (F) -- (BG);
\draw[thick] (GdFs) -- (Fs) -- (BGd);

\draw (G) ++(0,-0.07) node[left,outer sep=8pt] {open};
\draw (F) node[left,outer sep=8pt] {closed};
\draw (Gd) node[above,outer sep=11pt] (sseA) {};
\draw (sseA) node[rotate=-90] {$\subseteq$};
\draw (sseA) node[above,outer sep=8pt] {deterministic};
\draw (BGd) node[above,outer sep=12pt] (sseB) {};
\draw (sseB) node[rotate=-90] {$\subseteq$};
\draw (sseB) node[above,outer sep=8pt] {$\omega$-regular};

\end{tikzpicture}
\end{center}

Let $\FO^k$ be the fragment of first-order logic which uses (and reuses) at most $k$ variables. By $\Sigma_m$ we denote the formulas with $m$ quantifier blocks, starting with a block of existential quantifiers. Here, we assume that $x<y$ is the only binary predicate. We frequently identify a fragment with the languages definable therein. Let us consider $\FO^1$ as a toy example. With only one variable, we cannot make use of the binary predicate $x<y$. Therefore, in $\FO^1$ we can say nothing but which letters occur, that is, a language is definable in $\FO^1$ if and only if it is a Boolean combination of languages of the form $\Gamma^* a \Gamma^\infty$ for $a \in \Gamma$. Thus $\FO^1 \subseteq \mathbb{B}G$. It is an easy exercise to show that a regular language is in $\FO^1$ if and only if it is in $\mathbb{B}G$ and its syntactic monoid is both idempotent and commutative. The algebraic condition without the topology is too powerful since this would also include the language $\oneset{a,b}^*a^\omega$, which is not in $\FO^1$. For the fragment $\mathbb{B}\Sigma_1$, the same topology $\mathbb{B}G$ with a different algebraic condition works, cf.~\cite[Theorems~VI.3.7, VI.7.4 and~VIII.4.5]{pp04:short}.

In the fragment $\Sigma_2$, we can define the language $\oneset{a,b}^* a b^\infty$ which is not deterministic and hence not in $G_\delta$. Since the next level of the Borel hierarchy already contains all regular languages, another topology is required. For this purpose, Diekert and the first author introduced the \emph{alphabetic topology}~\cite{dk11tocs}: the open sets in this topology are arbitrary unions of languages of the form $uA^\infty$ for $u\in \Gamma^*$ and $A \subseteq \Gamma$. They showed that a regular language is definable in $\Sigma_2$ if and only if it satisfies some particular algebraic property and if it is open in the alphabetic topology. Therefore, the canonical ingredient for an effective characterization of $\mathbb{B}\Sigma_2$ is the Boolean closure of the open sets in the alphabetic topology. Our first main result shows that, for a given regular language $L$, it is decidable whether $L$ is a Boolean combination of open sets in the alphabetic topology. As a by-product, we see that every $\omega$-regular language which is a Boolean combination of arbitrary open sets in the alphabetic topology can be written as a Boolean combination of $\omega$-regular open sets. This resembles a similar result for the Cantor topology~\cite{sw74eik:short}.

A major breakthrough in the theory of regular languages over finite words is due to Place and Zeitoun~\cite{PlaceZeitoun14icalp}. They showed that, for a given regular language $L \subseteq \Gamma^*$, it is decidable whether~$L$ is definable in $\mathbb{B}\Sigma_2$. This solved a longstanding open problem, see e.g.~\cite[Section~8]{pin97handbook:short} for an overview. To date, no effective characterization of $\mathbb{B}\Sigma_3$ is known. Our second main result is to show that this decidability result transfers to languages in $\Gamma^\infty$. If $\Vtwo$ is the algebraic counterpart of $\mathbb{B}\Sigma_2$ over finite words, then we show that~$\Vtwo$ combined with the Boolean closure of the alphabetic topology yields a characterization of $\mathbb{B}\Sigma_2$ over $\Gamma^\infty$. Combining the decidability of $\Vtwo$ with our first main result, the latter characterization is effective. The proof that $\mathbb{B}\Sigma_2$ satisfies both the algebraic and the topological restrictions follows a rather straightforward approach. The main difficulty is to show the converse: every language satisfying both the algebraic and the topological conditions is definable in $\mathbb{B}\Sigma_2$.

%%%%%%%%%%%%%%%%%%%%%
\section{Preliminaries}
%Woerter etc
\subsection*{Words}
Let $\Gamma$ be a finite alphabet. By $\Gamma^*$ we denote the set of finite words over $\Gamma$; we write $1$ for the empty word.
%and by $\Gamma^+$ we denote the set of nonempty words over $\Gamma$, i.e., $\Gamma^+ = \Gamma^*\setminus \oneset{1}$. 
The set of infinite words is $\Gamma^\omega$ and the set 
of finite and infinite words is $\Gamma^\infty = \Gamma^* \cup \Gamma^\omega$. By $u,v,w$ we denote finite 
words and by $\alpha,\beta,\gamma$ we denote words in $\Gamma^\infty$. 
In this paper a \emph{language} is a subset of $\Gamma^\infty$. 
Let $L\subseteq \Gamma^*$ and $K \subseteq \Gamma^\infty$. 
As usually $L^*$ is the union of powers of $L$ and 
$LK = \set{u\alpha}{u\in L,\alpha\in K} \subseteq \Gamma^\infty$ is the concatenation of $L$ and $K$. 
By $L^\omega$ we denote the set of words which are an infinite concatenation of words in $L$ and 
the infinite concatenation $uu\cdots$ of the word $u$ is written $u^\omega$. 
A word $u = a_1 \ldots a_n$ is a (scattered) subword of $v$ if $v \in \Gamma^* a_1 \Gamma^* \ldots a_n\Gamma^*$.
%The notion of \emph{prefix}, \emph{factor} and \emph{suffix} is as usual. We write $u\leq \alpha$ for a prefix 
%$u$ of $\alpha$. Note that a suffix of an infinite word is also an infinite word. 
%The length of $\alpha$ is $\abs{\alpha}$ and we have $\abs{\alpha} \in \N \union \smallset{\infty}$. 
%If $i\leq j$, then $\alpha[i\ldots j]$ is the word comprising all positions of $\alpha$ from $i$ to $j$.
The \emph{alphabet} of a word is the set of all letters which appear in the word.
The \emph{imaginary alphabet} $\im(\alpha)$ of a word $\alpha \in \Gamma^\infty$ is the set of letters which 
appear infinitely often in $\alpha$. Let $A^\im = \set{\alpha \in \Gamma^\infty}{\im(\alpha) = A}$ 
be the set of words with imaginary alphabet $A$.
In the following we will restrict us to the study of regular languages. 
A language $L \subseteq \Gamma^*$ is regular if it is recognized by a (deterministic) finite automaton. 
A language $K \subseteq \Gamma^\omega$ is regular if it is recognized by a B\"uchi automaton. A language 
$L \subseteq \Gamma^\infty$ is regular if $L\cap \Gamma^*$ and $L\cap \Gamma^\omega$ are regular. This is 
equivalent to being recognized by an \emph{extended B\"uchi automaton}~\cite{dg08SIWT:short}.

%FO2, BSigma2
\subsection*{First-Order logic}
We consider first order logic $\FO$ over $\Gamma^\infty$. Variables range over the position of the word. 
The atomic formulas in 
this logic are $\ltrue$ for true, $x<y$ to compare two positions $x$ and $y$ and $\lambda(x)=a$ which is true if 
the word has an $a$ at position $x$. 
One may combine those atomic formulas with the boolean connectives $\lnot$,$\land$ and 
$\lor$ and quantifiers $\forall$ and $\exists$. A \emph{sentence} $\varphi$ is a $\FO$ formula 
without free variables. We write $\alpha \models \varphi$ 
if $\alpha\in\Gamma^\infty$ satisfies the sentence $\varphi$. 
The language defined by $\varphi$ is $L(\varphi) = \set{\alpha\in\Gamma^\infty}{\alpha \models \varphi}$.
We will classify the formula of $\FO$ by counting the number of quantifier alternations, that is the number of alternations of $\exists$ and $\forall$. 
The fragment $\Sigma_i$ of $\FO$ contains all $\FO$-formula in prenex normal form with $i$ blocks 
of quantifiers $\exists$ or $\forall$, starting with a block of existential quantors. 
The fragment $\B\Sigma_i$ contains all Boolean combinations of formulas in $\Sigma_i$.  
We are particularly interested in the fragment $\Sigmatwo$ and the Boolean combinations of formulas in $\Sigmatwo$. 
A language $L$ is definable in a fragment $\mathcal F$ (e.g. $\mathcal F$ is $\Sigmatwo$ or $\B\Sigmatwo$) if there exists 
a formula $\varphi \in \mathcal F$ such that $L = L(\varphi)$, i.e., if $L$ is definable by some $\varphi \in \mathcal F$. 
The classes of languages defined by $\Sigma_i$ and $\B\Sigma_i$ form a hierarchy, the quantifier alternation hierarchy. 
This hierarchy is strict, i.e., $\Sigma_i \subsetneq \B\Sigma_i \subsetneq \Sigma_{i+1}$ holds for all $i$, cf.~\cite{bk78jcss:short,tho82:short}.

%Monome, DD2
\subsection*{Monomials}
A \emph{monomial} is a language of the form $A_0^* a_1 A_1^* a_2 \cdots A_{n-1}^*
a_n A_n^\infty$ for $n \geq 0$, $a_i \in \Gamma$ and $A_i \subseteq \Gamma$. 
The number $n$ is called the \emph{degree}. In particular, $A_0^\infty$ is a monomial of degree $0$.
A monomial is called $k$-monomial if it has degree at most $k$.
In \cite{dk11tocs} it is shown that a language $L\subseteq \Gamma^\infty$ is in $\Sigma_2$ if and only if 
it is a finite union of monomials. We are interested in $\B\Sigmatwo$ and thus in finite Boolean
combination of monomials $A_0^* a_1 A_1^* a_2 \cdots A_{n-1}^* a_n A_n^\infty$. 
For this, let $\equiv^\infty_k$ be the equivalence relation on $\Gamma^\infty$ such that 
$\alpha \equiv^\infty_k \beta$ if $\alpha$ and $\beta$ are contained in exactly the same $k$-monomials. 
Thus, $\equiv_k^\infty$-classes are Boolean combinations of monomials and every language in $\B\Sigmatwo$ is a union of $\equiv_k^\infty$-classes for some $k$. 
Further, since there are only finitely many monomials of degree $k$, there are only finitely many $\equiv_k^\infty$-classes. 
The equivalence class of some word $\alpha$ in $\equiv_k^\infty$ is denoted by $[\alpha]_k^\infty$.
Note, that such a characterization of $\B\Sigmatwo$ in terms of monomials does not yield a decidable characterization. 

Our characterization of languages $L\subseteq \Gamma^\infty$ in $\B\Sigmatwo$ is based on the characterization of languages in $\B\Sigmatwo$ over finite words. 
For this, we also introduce monomials over $\Gamma^*$. A \emph{monomial} over $\Gamma^*$ is a language of the form 
$A_0^* a_1 A_1^* a_2 \cdots A_{n-1}^* a_n A_n^*$ for $n \geq 1$, $a_i \in \Gamma$ and $A_i \subseteq \Gamma$. 
The degree is defined as above. 
Let $\equiv_k$ be the congruence on $\Gamma^*$ which is defined by $u \equiv_k$ if and only if 
$u$ and $v$ are contained in the same monomials over $\Gamma^*$. The equivalence classes are noted by $[u]_k$.
Again, a language $L \subseteq \Gamma^*$ is in $\B\Sigmatwo$ over $\Gamma^*$ if and only if it is a union of $\equiv_k$-classes for some $k$, i.e., if $L = \cup_{u\in L} [u]_k$.

%Semigroup
\subsection*{Algebra}

In this paper all monoids are either finite or free. Finite monoids are a common way for defining regular languages. A monoid element $e$ is \emph{idempotent} if $e^2=e$. Every element $x$ of a finite monoid admits a unique idempotent $x^i$ for some integer $i \geq 1$. 
An \emph{ordered monoid} $(M,\leq)$ is a monoid equipped with a partial order which is compatible with the monoid multiplication, i.e., $s\leq t$ and $s' \leq t'$ implies $ss' \leq tt'$. Every monoid can be ordered by using the identity as partial order. For a homomorphism $h : (N,\leq) \to (M,\leq)$ between ordered monoids we require $s \leq t \implies h(s) \leq h(t)$ for all $s,t \in N$. A \emph{divisor} is the homomorphic image of a submonoid.

A class of monoids which is closed under division and finite direct products is a \emph{pseudovariety}. 
Eilenberg showed a correspondence between certain classes of languages (of finite words) and pseudovarieties \cite{eil76:short}. 
A homomorphism $h : (N,\leq) \to (M,\leq)$ between two ordered monoids must hold $s \leq t \implies h(s) \leq h(t)$ for $s,t \in N$. 
A pseudovariety of ordered monoids is defined defined the same way as with unordered monoids, using the homomorphisms of ordered monoids. 
The Eilenberg correspondence then also holds for ordered monoids \cite{pin95:short}.
Let $\Vthreehalf$ be the pseudovariety of ordered monoids which corresponds to $\Sigmatwo$ and 
$\Vtwo$ be the pseudovariety of monoids which corresponds to languages in $\B\Sigmatwo$. 
Since $\Sigmatwo \subseteq \B\Sigmatwo$, we obtain $\Vthreehalf \subseteq \Vtwo$ when ignoring the order.
The connection between monoids and languages is given by the notion of \emph{recognizability}.
A language $L \subseteq \Gamma^*$ is \emph{recognized} by an ordered monoid $(M, \leq)$ 
if there is a monoid homomorphism $h : \Gamma^* \to M$ such that $L = \cup\set{h^{-1}(t)}{s \leq t \text{ for some } s\in h(L)}$. If $M$ is not ordered, then this means that $L$ is an arbitrary union of languages of the form $h^{-1}(t)$.

For $\omega$-languages $L \subseteq \Gamma^\infty$ the notion of recognizability is slightly more technical. For simplicity, we only consider recognition by non-ordered monoids. Let $h : \Gamma^* \to M$ be a monoid homomorphism. If the homomorphism $h$ is understood, we write $[s]$ for the language $h^{-1}(s)$. We call $(s,e)\in M\times M$ a \emph{linked pair} if $e^2 = e$ and $se = s$. By Ramsey's Theorem~\cite{ram30:short} for every word $\alpha \in \Gamma^\infty$ there exists a linked pair $(s,e)$ such that $\alpha \in [s][e]^\omega$.
A language $L \subseteq \Gamma^\infty$ is recognized by $h$ if 
\[
L = \bigcup \set{[s][e]^\omega}{(s,e) \text{ is a linked pair with } [s][e]^\omega \cap L \neq \emptyset}.
\]
Since $1^\omega = 1$, the language $[1]^\omega$ also contains finite words. We thus obtain recognizability of languages of finite words as a special case. A language $L \subseteq \Gamma^\infty$ is \emph{regular} if it is recognized by (a homomorphism to) a finite monoid.

%Let $g : \Gamma^* \to M$ and $h : \Gamma^* \to N$ be homomorphisms. We say that $g$ is a divisor of $h$ if there is a projection $\pi : N \to M$ such that $g = \pi \circ h$. 
%A well-known property of the syntactic homomorphism is that a language $L$ is recognized by some homomorphism $h$ if and only if $\eta_L$ is a divisor of $h$ \cite{cps06:short}. In that sense, the syntactic homomorphism is a canonical object for recognizing regular languages.

Next, we define syntactic homomorphisms and syntactic monoids; as we will see, these are the minimal recognizers of a regular language. 
Let $L \subseteq \Gamma^\infty$ be a regular language. 
The syntactic monoid of $L$ is defined as the quotient $\Synt(L) = \Gamma^*/\!\approx_L$ where $u \approx_L v$ holds if and only if for all $x,y,z \in \Gamma^*$ we have both $xuyz^\omega \in L \iff xvyz^\omega$ and $x(uy)^\omega \in L \iff x(vy)^\omega \in L$. 
The syntactic monoid can be ordered by the quasiorder $\preceq_L$ defined by $u \preceq_L v$ if for all $x,y,z \in \Gamma^*$ we have $xuyz^\omega \in L \implies xvyz^\omega$ and $x(uy)^\omega \in L \implies x(vy)^\omega \in L$. 
One can effectively compute the syntactic homomorphism of $L$.
%The \emph{syntactic monoid} of $L$ is given by $\Synt(L) = \Gamma^* / \equiv_L$ and the \emph{syntactic homomorphism} $\eta_L : \Gamma^* \to \Synt(L)$ is the canonical projection. 
The syntactic monoid $\Synt(L)$ satisfies the property that $L$ is regular if and only if $\Synt(L)$ is finite and the canonical 
homomorphism $h_L : \Gamma^* \to \Synt(L)$ recognizes $L$, see~e.g.~\cite{pp04:short,tho90handbook:short}.
Every pseudovariety is generated by its syntactic monoids \cite{eil76:short}, i.e., every monoid in a given pseudovariety is a divisor of a direct product of syntactic monoids.
The importance of the syntactic monoid of some language $L\subseteq \Gamma^\infty$ is that it is the smallest monoid recognizing $L$:

\begin{lemma}\label{lem:syntminimal}
	Let $L\subseteq \Gamma^\infty$ be a language which is recognized by a homomorphism $h : \Gamma^* \to (M,\leq)$. 
	Then, $(\Synt(L), \preceq_L)$ is a divisor of $(M,\leq)$.
\end{lemma}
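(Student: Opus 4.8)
The plan is to show that $\approx_L$ refines the kernel of $h$ (in the ordered sense), which immediately yields the desired divisor. Concretely, I would first argue that $h$ being a recognizer of $L$ means $L$ is a union of sets of the form $[s][e]^\omega$ over linked pairs, and hence membership of a word $x\alpha$-type expression in $L$ depends only on the images under $h$ of the relevant finite factors. The key observation is that for finite words $u,v$ with $h(u)\le h(v)$ and for any $x,y,z\in\Gamma^*$, compatibility of $\le$ with multiplication gives $h(xuyz^\omega)\dots$ — more precisely $h(xuy)\le h(xvy)$ and $h((xuy))\le h((xvy))$ on the relevant idempotent powers — so that the linked pair associated to $xuyz^\omega$ is $\le$-below (componentwise) that associated to $xvyz^\omega$. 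Since $L$ is a union of $\le$-upward-closed blocks of linked pairs (this is what recognition by an \emph{ordered} monoid means, via the definition $L=\bigcup\{h^{-1}(t)\mid s\le t,\ s\in h(L)\}$ transported to linked pairs), membership is preserved upward, giving $xuyz^\omega\in L\implies xvyz^\omega\in L$ and likewise for the $\omega$-power condition. This is exactly $u\preceq_L v$.

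Thus I would establish the implication $h(u)\le h(v)\implies u\preceq_L v$ for all $u,v\in\Gamma^*$. In particular $h(u)=h(v)$ implies $u\approx_L v$, so the map $u\mapsto h(u)$ factors through $\Gamma^*/\!\approx_L=\Synt(L)$: there is a well-defined surjective monoid homomorphism $\psi$ from $h(\Gamma^*)\subseteq M$ onto $\Synt(L)$ sending $h(u)$ to $[u]_{\approx_L}$. Here $h(\Gamma^*)$ is a submonoid of $(M,\le)$, and $\psi$ is its homomorphic image, so $(\Synt(L),\preceq_L)$ is a divisor of $(M,\le)$ once I check $\psi$ is order-preserving — but that is precisely the implication $h(u)\le h(v)\implies u\preceq_L v$ rephrased as $\psi(h(u))\preceq_L\psi(h(v))$. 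One should note $\psi$ is genuinely a homomorphism of \emph{ordered} monoids, and that the restriction $h\colon\Gamma^*\to h(\Gamma^*)$ is a surjective monoid morphism, so $\Synt(L)$ is literally (submonoid, then quotient) a divisor of $M$.

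The main obstacle, and the only place needing care, is the handling of the linked-pair/Ramsey bookkeeping for infinite words: one must be sure that replacing $u$ by $v$ inside $x(uy)^\omega$ does not change which block $[s][e]^\omega$ the word lands in except possibly moving to a $\le$-larger block. Since $(uy)^\omega$ unravels as $(uy)(uy)\cdots$, and $h(uy)\le h(vy)$, the idempotent $h((uy))$ obtained from a suitable power satisfies $h((uy))\le h((vy))$ by monotonicity of the monoid operation (applied to the fixed power $i$ with $h(uy)^i$ idempotent — one can pass to a common power of both), and the prefix images compare as $h(x(uy)^j)\le h(x(vy)^j)$; assembling these into the linked-pair description of $L$ gives the preservation of membership. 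Everything else — well-definedness of $\psi$, that it is a monoid morphism, that submonoid-then-quotient is division — is routine. I would also remark that the unordered case (take $\le$ to be equality) is the special case yielding the familiar fact that $\Synt(L)$ divides any recognizing monoid.
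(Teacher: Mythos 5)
Your proposal is correct and follows essentially the same route as the paper: establish $h(u)\leq h(v)\implies u\preceq_L v$ via the linked-pair description of recognition and compatibility of the order with multiplication, then factor $h$ through its image to obtain $(\Synt(L),\preceq_L)$ as a quotient of a submonoid of $(M,\leq)$. Your explicit handling of the common-power bookkeeping and of restricting to $h(\Gamma^*)$ (rather than assuming surjectivity) only makes explicit what the paper leaves implicit, and like the paper you rely on the natural reading of ordered recognition for infinite words, which the paper itself sidesteps by noting the ordered case is only applied to finite words.
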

\begin{proof}
	We assume that $h$ is surjective and show that $\Synt(L)$ is a quotient of $M$. 
	If $h$ is not surjective, we can therefore conclude that $\Synt(L)$ is a divisor of $M$.
	We show that $h(u) \leq h(v) \implies u \preceq_L v$. Let $u,v$ be words with $h(u) \leq h(v)$ and denote $h^{-1}(h(w)) = [h(w)]$ for words $w$. 
	Assume $xuyz^\omega \in L$, then there exists an index $i$ such that $(h(xuyz^i), h(z)^\omega)$ is a linked pair. 
	Thus, $[h(xuyz^i)][h(z)]^\omega \subseteq L$ and by $h(u) \leq h(v)$ also $[h(xvyz^i)][h(z)]^\omega \subseteq L$. This implies $xvyz^\omega \in L$. 
	The proof that $x(uy)^\omega \in L \implies x(vy)^\omega \in L$ is similar. Thus, $u \preceq_L v$ holds which shows the claim.
	%	\qed
\end{proof}

We stated the lemma for ordered monoids also for languages containing infinite words, but in the ordered setting it will be applied only for finite words.

\section{Alphabetic Topology}\label{sec:top}
The topological component is crucial for our approach. 
As mentioned in the introduction, combining algebraic and topological conditions is a successful approach for characterizations of language 
classes over $\Gamma^\infty$. 
A topology on a set $X$ is given by a family of subsets of $X$ (called open) which are closed under finite intersections and arbitrary unions. 
We define the \emph{alphabetic topology} over $\Gamma^\infty$ by its basis $\set{uA^\infty}{u\in \Gamma^*, A\subseteq \Gamma}$. 
Hence, an open set is described as $\bigcup_A W_A A^\infty$ with $W_A\subseteq \Gamma^*$. 
The alphabetic topology has been introduced in~\cite{dk11tocs}, where it is used as a part of the characterization of $\Sigmatwo$ over $\Gamma^\infty$.
\begin{theorem}[\cite{dk11tocs}]
	Let $L \subseteq \Gamma^\infty$ be a regular language. Then $L \in \Sigmatwo$ if and only if $\Synt(L) \in \Vthreehalf$ and $L$ is open in the alphabetic topology. 
\end{theorem}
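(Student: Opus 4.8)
The plan is to reduce both implications to three facts already available: that $\Sigmatwo$ over $\Gamma^\infty$ is precisely the class of finite unions of monomials $A_0^* a_1 A_1^* \cdots a_n A_n^\infty$; that $\Sigmatwo$ over $\Gamma^*$ corresponds to $\Vthreehalf$, so a regular $K \sse \Gamma^*$ is $\Sigmatwo$-definable if and only if $\Synt(K) \in \Vthreehalf$, in which case $K$ is a finite union of monomials over $\Gamma^*$; and \reflem{lem:syntminimal}.

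For the direction from left to right, assume $L$ is a finite union of monomials. Openness is immediate: a single monomial $A_0^* a_1 A_1^* \cdots a_n A_n^\infty$ is the union of the basic open sets $w A_n^\infty$ over all $w \in A_0^* a_1 A_1^* \cdots A_{n-1}^* a_n$, and a finite union of open sets is open. For the algebraic part I would observe that $\approx_L$ is the intersection, over all finite-word contexts, of the syntactic congruences of the residual languages $\set{u \in \Gamma^*}{xuyz^\oo \in L}$ and $\set{u \in \Gamma^*}{x(uy)^\oo \in L}$ for $x,y,z \in \Gamma^*$; since $L$ is regular, only finitely many of these occur (each depends on $x,y,z$ only through their images in a fixed finite recognizer of $L$). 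A case analysis on how a factorization into $x,u,y,z^\oo$ (respectively into $x$ and $(uy)^\oo$) can realize the letter skeleton $a_1 \cdots a_n$ of a monomial of $L$, together with the induced conditions of the form ``$\alph$ or $\im$ of some block lies in $A_i$'', shows that each residual language is again a finite union of monomials over $\Gamma^*$, hence belongs to $\Sigmatwo$ over $\Gamma^*$. Therefore $(\Synt(L),\preceq_L)$ divides a finite product of syntactic ordered monoids of $\Sigmatwo$-languages over $\Gamma^*$, all of which lie in $\Vthreehalf$, so $\Synt(L) \in \Vthreehalf$.

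For the converse, assume $\Synt(L) \in \Vthreehalf$ and $L$ open, and let $h \colon \Gamma^* \to \Synt(L)$ be the syntactic homomorphism. Since $L$ is open, $L = \bigcup_{A \sse \Gamma} W_A A^\infty$ with $W_A = \set{u \in \Gamma^*}{uA^\infty \sse L}$, which is a \emph{finite} union. I claim each $W_A$ is $\Sigmatwo$-definable over $\Gamma^*$. First, $W_A$ is recognized by $h$: if $h(u) = h(u')$ and $\gam \in A^\infty$, this is clear for finite $\gam$, and for infinite $\gam$ Ramsey's Theorem gives a linked pair $(p,f)$ with $\gam \in [p][f]^\oo$, whence $u\gam \in [h(u)p][f]^\oo$, a language that is either contained in $L$ or disjoint from it; so $u\gam \in L$ if and only if $u'\gam \in L$. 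Second, this recognition respects the order: if $[s][f]^\oo \sse L$ and $s \preceq_L s'$, choose $\tilde u, \tilde u', \tilde v$ with $h(\tilde u) = s$, $h(\tilde u') = s'$, $h(\tilde v) = f$; then $\tilde u \tilde v^\oo \in L$, so $\tilde u' \tilde v^\oo \in L$ by the definition of $\preceq_L$, so $[s'][f]^\oo \sse L$. Combining these two observations (using the linked-pair decomposition of $\gam$ again) gives $u \preceq_L v \Rightarrow u \preceq_{W_A} v$. Hence $(\Synt(W_A),\preceq_{W_A})$ divides $(\Synt(L),\preceq_L) \in \Vthreehalf$ by \reflem{lem:syntminimal}, so $W_A$ is $\Sigmatwo$-definable over $\Gamma^*$, i.e.\ a finite union of monomials over $\Gamma^*$. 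Finally, from $(C_0^* b_1 C_1^* \cdots b_m C_m^*) A^\infty = (C_0^* b_1 \cdots b_m C_m^*) \cup \bigcup_{a \in A} C_0^* b_1 \cdots b_m C_m^* a A^\infty$ and the fact that a monomial over $\Gamma^*$ is a finite union of monomials over $\Gamma^\infty$ (split off the last letter), each $W_A A^\infty$, and therefore $L$, is a finite union of monomials over $\Gamma^\infty$, i.e.\ $L \in \Sigmatwo$.

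I expect the main obstacle to be the algebraic half of the first implication: arguing that the residuals $\set{u}{xuyz^\oo \in L}$ are $\Sigmatwo$-definable over $\Gamma^*$, and not merely $\B\Sigmatwo$-definable, requires careful bookkeeping of the alphabet constraints on each block and of the fact that no complementation is introduced. In the converse direction the only delicate point is the interplay between the syntactic quasi-order on finite words and the recognition of $\omega$-languages here by \emph{unordered} monoids; the monotonicity of linked pairs ($[s][f]^\oo \sse L$ and $s \preceq_L s'$ imply $[s'][f]^\oo \sse L$) is exactly what bridges that gap, and once it is available the rest is routine.
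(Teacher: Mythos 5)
The paper does not prove this statement: it is imported from \cite{dk11tocs} as a known result, so there is no in-paper proof to compare your plan against. Judged on its own, your reduction is sound and close in spirit to the original: both directions go through the finite-word correspondence between $\Sigmatwo$ and $\Vthreehalf$ and the description of $\Sigmatwo$ over $\Gamma^\infty$ as finite unions of monomials, with Ramsey and linked pairs bridging finite and infinite behaviour. Two places need more care than the sketch provides, though neither is fatal. First, in the monotonicity step of the converse direction, $s \preceq_L s'$ does not make $(s',f)$ a linked pair, so ``$[s'][f]^\oo \sse L$'' is not directly an instance of recognition; pass instead to $[s'][f]^\oo \sse [s'f][f]^\oo$ with $(s'f,f)$ linked --- in your actual application this is harmless, since the pair you produce is $(h(v)p,f)$ with $pf=p$, hence linked. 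Relatedly, the order on $W_A$ relevant for the $\Vthreehalf$ correspondence is its finite-word syntactic order, not the $\Gamma^\infty$-order used in \reflem{lem:syntminimal}; the former is a quotient of the latter (take $z=1$ in the contexts), so the division argument survives, but this should be said. Second, in the left-to-right direction the residuals $\set{u}{x(uy)^\oo\in L}$ come out as finite unions of finite \emph{intersections} of monomials over $\Gamma^*$ (markers of a monomial of $L$ may fall into several copies of $u$), together with a normalization bounding how many copies of $uy$ carry markers; you therefore need closure of $\Sigmatwo$ over $\Gamma^*$ under finite intersection (or an argument in the style of \reflem{lem:onemonomial}) to land on finite unions of monomials. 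With these points made explicit, your plan gives a correct proof.
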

The alphabetic topology has by itself been the subject of further study \cite{SchwarzStaiger10}.
We are particularly interested in Boolean combinations of open sets. 
An effective characterization of a language $L$ being a Boolean combination of open sets in the alphabetic topology 
is given in the proposition below. 
\begin{theorem}\label{thm:boolalphopen}
Let $L\subseteq \Gamma^\infty$ be a regular language which is recognized by $h : \Gamma^* \to M$. 
Then the following are equivalent:
\begin{enumerate}
\item $L$ is a Boolean combination of open sets in the alphabetic topology where each open set is regular.\label{boolalphopen:aaaa}
%\item $L$ is a Boolean combination of open sets in the strict alphabetic topology whereas each open set is regular.\label{boolalphopen:eeee}
\item $L$ is a Boolean combination of open sets in the alphabetic topology.\label{boolalphopen:bbbb}
%\item $L$ is a Boolean combination of open sets in the strict alphabetic topology. \label{boolalphopen:dddd}
\item For all linked pairs $(s,e),(t,f)$ it holds that if there exists an alphabet $C$ and words $\hat e, \hat f$ with 
$h(\hat e) = e, h(\hat f) = f$, $\alph(\hat e) = \alph(\hat f) = C$ and $s\cdot h(C^*) = t\cdot h(C^*)$, 
then $[s][e]^\omega \subseteq L \iff [t][f]^\omega \subseteq L$.\label{boolalphopen:cccc}
\end{enumerate}
\end{theorem}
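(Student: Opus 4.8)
The implications $\eqref{boolalphopen:aaaa}\Rightarrow\eqref{boolalphopen:bbbb}$ is trivial, so the real work is the cycle $\eqref{boolalphopen:bbbb}\Rightarrow\eqref{boolalphopen:cccc}\Rightarrow\eqref{boolalphopen:aaaa}$. The plan is to prove $\eqref{boolalphopen:bbbb}\Rightarrow\eqref{boolalphopen:cccc}$ by a topological separation argument and $\eqref{boolalphopen:cccc}\Rightarrow\eqref{boolalphopen:aaaa}$ by an explicit construction of finitely many regular open sets whose Boolean combination is $L$.

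For $\eqref{boolalphopen:bbbb}\Rightarrow\eqref{boolalphopen:cccc}$: Suppose $(s,e),(t,f)$ are linked pairs and $C,\hat e,\hat f$ are as in the hypothesis, with $h(\hat e)=e$, $h(\hat f)=f$, $\alph(\hat e)=\alph(\hat f)=C$ and $s\cdot h(C^*)=t\cdot h(C^*)$. Pick $u\in h^{-1}(s)$ and $v\in h^{-1}(t)$. The key point is that the words $u\hat e^\omega$ and $v\hat f^\omega$ are \emph{topologically indistinguishable by Boolean combinations of alphabetic-open sets}: I will show every basic open set $wA^\infty$ that contains $u\hat e^\omega$ also meets every neighbourhood of $v\hat f^\omega$ and vice versa, by exploiting that $s h(C^*)=th(C^*)$ lets us insert a $C$-word to travel from a prefix of one to a prefix of the other while staying inside $wA^\infty$ (note $A\supseteq\im(u\hat e^\omega)=C$ forces $A\supseteq C$). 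More precisely, I will produce, for each $k$, words $\alpha_k\in [s][e]^\omega$ and $\beta_k\in[t][f]^\omega$ which agree with $u\hat e^\omega$ resp.\ $v\hat f^\omega$ on longer and longer prefixes and additionally satisfy $\alpha_k\in wA^\infty \Leftrightarrow \beta_k\in wA^\infty$ for all basic sets with bounded parameters; passing to the limit, since a Boolean combination of open sets is determined by finitely many basic sets, $u\hat e^\omega\in L \Leftrightarrow v\hat f^\omega\in L$. As $L$ is recognized by $h$, membership of $u\hat e^\omega$ in $L$ is equivalent to $[s][e]^\omega\subseteq L$ (and similarly on the other side), giving $[s][e]^\omega\subseteq L \Leftrightarrow [t][f]^\omega\subseteq L$.

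For $\eqref{boolalphopen:cccc}\Rightarrow\eqref{boolalphopen:aaaa}$: Define an equivalence on linked pairs by $(s,e)\sim(t,f)$ iff they are connected by a chain of steps of the form in~\eqref{boolalphopen:cccc} (witnessed by some $C,\hat e,\hat f$); condition~\eqref{boolalphopen:cccc} says $[s][e]^\omega\subseteq L$ depends only on the $\sim$-class. For each linked pair $(s,e)$ let $C(s,e)$ be the common value $\alph(\hat e)$ when a witness exists — more robustly, I will work with $A(s,e):=\im$ of words in $[s][e]^\omega$, which equals the set of letters appearing in some (equivalently every suitable) $\hat e$ — and consider the regular open set $O_{s,e} = [s]\,A^\infty$ where $A=A(s,e)$, together with its ``refinements'' obtained by also fixing the behaviour of $h$ on a prefix. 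The construction I have in mind: for a suitable finite index $N$ (depending on $M$), associate to each word $\alpha$ the data consisting of $\im(\alpha)=A$ and the sequence $h(\text{prefixes})$ up to the point where the run stabilizes in the $A$-part; this data is captured by a finite Boolean combination of basic open sets $wA^\infty$, all regular, and~\eqref{boolalphopen:cccc} guarantees that $\alpha\in L$ depends only on this data. Hence $L$ is a finite Boolean combination of regular alphabetic-open sets.

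The main obstacle is the limit argument in $\eqref{boolalphopen:bbbb}\Rightarrow\eqref{boolalphopen:cccc}$: one must manufacture the approximating words $\alpha_k,\beta_k$ inside the correct linked-pair languages while simultaneously controlling membership in \emph{all} the finitely many basic open sets occurring in the fixed Boolean combination — the delicate interplay is that inserting a $C$-word (to exploit $sh(C^*)=th(C^*)$) changes prefixes but must not change the imaginary alphabet or escape the sets $wA^\infty$, which works precisely because any such set containing the target has $A\supseteq C$. Getting the bookkeeping right, and making sure the finitely many relevant basic sets can all be ``frozen'' simultaneously by taking prefixes long enough and pumping the idempotent parts $\hat e,\hat f$ enough times, is where the care is needed; the converse direction $\eqref{boolalphopen:cccc}\Rightarrow\eqref{boolalphopen:aaaa}$ is then a comparatively routine, if slightly tedious, finiteness-and-bookkeeping construction.
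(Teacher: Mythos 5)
Your implication \ref{boolalphopen:bbbb}~$\Rightarrow$~\ref{boolalphopen:cccc} has a genuine gap, and its central claim is false as stated. Two distinct words $u\hat e^\omega \neq v\hat f^\omega$ are never ``topologically indistinguishable by Boolean combinations of alphabetic-open sets'': taking $A=\Gamma$, the basic open set $w\Gamma^\infty$ for a prefix $w$ of one word that is not a prefix of the other already separates them. Likewise, the supporting step ``a Boolean combination of open sets is determined by finitely many basic sets'' fails: condition~\ref{boolalphopen:bbbb} allows arbitrary open sets $\bigcup_A W_A A^\infty$ with $W_A\subseteq\Gamma^*$ arbitrary (not even regular), so membership in such a set --- and, worse, membership in its complement, which is a closed set --- is not controlled by basic sets ``with bounded parameters''; extending a prefix (which your construction of $\alpha_k,\beta_k$ must do to exploit $s\cdot h(C^*)=t\cdot h(C^*)$) can throw the word into one of the subtracted open sets at any finite stage, and no freezing of bounded data prevents this. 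What must be proved is a statement about the saturated languages $[s][e]^\omega$ and $[t][f]^\omega$, not about one fixed pair of points. The paper's proof works exactly around the obstacle you flag: writing $L=\bigcup_{i=1}^n\bigl(P_iA_i^\infty\setminus\bigcup_j Q_{i,j}B_{i,j}^\infty\bigr)$, assuming $[s][e]^\omega\subseteq L$ and $[t][f]^\omega\cap L=\emptyset$, it builds an increasing sequence of prefixes $u_\ell\in[s]$ (using $x,y\in C^*$ with $s\cdot h(x)=t$ and $t\cdot h(y)=s$): at each stage $u_\ell\hat e^\omega\in L$ lies in some component $i$ not yet excluded, the word $u_\ell\hat e^kx\hat f^\omega\in[t][f]^\omega$ must then be caught by some $Q_{i,j}B_{i,j}^\infty$, and since $C\subseteq B_{i,j}$ the entire cone $u_{\ell+1}C^\infty$ with $u_{\ell+1}=u_\ell\hat e^kx\hat f^{k'}y\in[s]$ is trapped inside that subtracted set, permanently disabling component $i$. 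After at most $n$ rounds this contradicts $u_\ell[e]^\omega\subseteq L\cap u_\ell C^\infty$; recognizability of $L$ by $h$ then upgrades $[t][f]^\omega\cap L\neq\emptyset$ to $[t][f]^\omega\subseteq L$. This ``one component killed forever per round'' bookkeeping is the missing idea; no limit argument over points replaces it.

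Your sketch of \ref{boolalphopen:cccc}~$\Rightarrow$~\ref{boolalphopen:aaaa} is in the right spirit and close to the paper's: for $\alpha\in L$ with $C=\im(\alpha)$ and a suitable linked pair $(s,e)$, the paper takes the regular Boolean combination $L(s,C)=[s]C^\infty\setminus\bigl(\bigcup_{D\subsetneq C}\Gamma^*D^\infty\cup\bigcup_{s\notin t\cdot h(C^*)}[t]C^\infty\bigr)$ and verifies $L(s,C)\subseteq L$ via~\ref{boolalphopen:cccc}; your ``imaginary alphabet plus $h$-values of prefixes whose tail stays in it'' records essentially the same invariant, though you would still have to carry out the verification (producing $\hat e,\hat f$ with alphabet exactly $C$ and showing $s\cdot h(C^*)=t\cdot h(C^*)$ for the linked pair of an arbitrary $\beta\in L(s,C)$). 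As it stands, however, the proposal does not establish the theorem because of the flawed \ref{boolalphopen:bbbb}~$\Rightarrow$~\ref{boolalphopen:cccc} direction.
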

\begin{proof}
``\ref{boolalphopen:aaaa} $\Rightarrow$ \ref{boolalphopen:bbbb}'': This is trivial.

%``\ref{boolalphopen:bbbb} $\Rightarrow$ \ref{boolalphopen:dddd}'': Since $uA^\infty = \bigcup_{B\subseteq A} uA^*B^\infty \cap B^\im$, every open set in the alphabetic topology is also open in the strict alphabetic topology. The same argument also shows ``\ref{boolalphopen:aaaa} $\Rightarrow$ \ref{boolalphopen:eeee}''.
%
%``\ref{boolalphopen:eeee} $\Rightarrow$ \ref{boolalphopen:dddd}'': This is trivial.
%
%``\ref{boolalphopen:dddd} $\Rightarrow$ \ref{boolalphopen:cccc}'': 
``\ref{boolalphopen:bbbb} $\Rightarrow$ \ref{boolalphopen:cccc}'': 
Let $L$ be a Boolean combination of strict alphabetic open sets. 
We may assume 
\[
L = \bigcup_{i=1}^n \left((P_iA_i^\infty) \setminus \left( \bigcup_{j=1}^{m_i} Q_{i,j}B_{i,j}^\infty\right)   \right)
\]
for some $P_i, Q_{i,j} \subseteq \Gamma^*$ and alphabets $A_i, B_{i,j} \subseteq \Gamma$.
Assume $[s][e]^\omega \subseteq L$, but $[t][f]^\omega \not\subseteq L$. It suffices to show that $[t][f]^\omega \cap L$ is nonempty. 
Let $u\hat e^\omega\in [s][e]^\omega \subseteq L$ for some $u\in [s], \hat e\in[e]$ with $\alph(\hat e) = C$. 
We also choose some words $\hat f,x,y\in C^*$ such that $h(\hat f) = f$, $s\cdot h(x) = t$, $t\cdot h(y) = s$ and $\alph(\hat f) = C$.

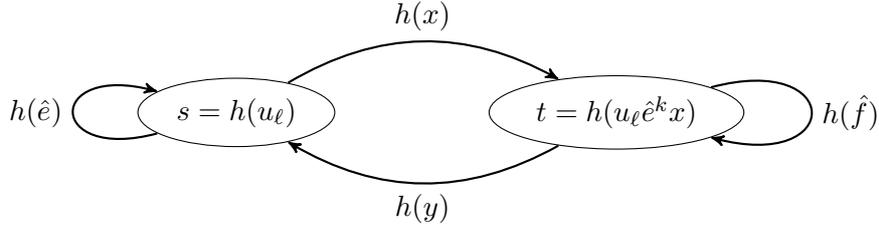
\begin{figure}
	\begin{center}		
		\begin{tikzpicture}[>=stealth',node distance = 5cm]
		\node[draw,ellipse] (s) {$s=h(u_\ell)$};
		\node[draw,ellipse,right of=s] (t) {$t=h(u_\ell \hat e^k x)$}; 
		\path[thick,->] (s) edge[bend left] node[above] {$h(x)$} (t);
		\path[thick,->] (t) edge[bend left] node[below] {$h(y)$} (s);
		\path[thick,->] (s) edge[loop left, looseness=7] node[left] {$h(\hat e)$} (s);
		\path[thick,->] (t) edge[loop right, looseness=7] node[right] {$h(\hat f)$} (t);
		\end{tikzpicture}
	\end{center}
	\caption{Part of the right Cayley graph of $M$ in the proof of ``\ref{boolalphopen:bbbb} $\Rightarrow$ \ref{boolalphopen:cccc}''.}
\end{figure}

The idea is to find an increasing sequence of words $u_\ell\in [s]$ and sets $I_\ell \subseteq \oneset{1,\ldots, n}$ such that 
$u_\ell C^\infty  \cap \left(P_iA_i^\infty \setminus \left( \bigcup_{j=1}^{m_i} Q_{i,j}B_{i,j}^\infty\right)\right) = \emptyset$ for all 
$i \in I_\ell$.
We can set $u_0 = u$ and $I_0 = \emptyset$. Consider the word $u_\ell \hat e^\omega \in L$. 
There exists an index $i\in \oneset{1,\ldots,n}\setminus I_\ell$ such that 
$u_\ell \hat e^\omega \in P_iA_i^\infty \setminus \left( \bigcup_{j=1}^{m_i} Q_{i,j}B_{i,j}^\infty\right)$. 
Choose $k$ big enough, such that in the decomposition $u_\ell \hat e^k \hat e^\omega$ the part $u_\ell \hat e^k$ 
overlaps into the $A_i^\infty$ part.
Since $C = \alph(\hat e) \subseteq A_i$, we also have $ \beta_\ell = u_\ell \hat e^k x \hat f^\omega \in P_iA_i^\infty \cap A_i^\im$. 
By construction we have $\beta_\ell \in [t][f]^\omega$ and therefore, 
assuming $[t][f]^\omega \cap L = \emptyset$, there exists an index $j$ such that 
$\beta_\ell \in Q_{i,j}B_{i,j}^\infty$.  Analogous, there exists a $k'$ such that 
$u_\ell \hat e^k x \hat f^{k'} y C^\infty \subseteq Q_{i,j}B_{i,j}^\infty$. 
Hence we can choose $u_{\ell+1} = u_\ell \hat e^k x \hat f^{k'} y$ and $I_{\ell+1} = I_\ell \cup \oneset{i}$.

Since $u_\ell[e]^\omega \subseteq L\cap u_\ell C^\infty$, this construction has to fail at an index $\ell<n$. 
Therefore, the assumption is not 
justified and we have $[t][f]^\omega \cap L \neq \emptyset$, proving the claim.

``\ref{boolalphopen:cccc} $\Rightarrow$ \ref{boolalphopen:aaaa}'': 
Let $\alpha\in [s][e]^\omega \subseteq L$ for a linked pair $(s,e)$. 
%Let $\alpha = u\alpha'$ be a decomposition according to 
% \ref{boolalphopen:cccc} and $C$ be the alphabet of $\alpha'$.
Let $C = \im(\alpha)$. By $\alpha\in [s][e]^\omega$ and the definition of $C$  there exists an $\hat e \in C^*$ with $\alph(\hat e) = C$ and $h(\hat e) = e$.
Define \[
L' := L(s,C) := [s]C^\infty \setminus \left( \bigcup_{D \subsetneq C} \Gamma^*D^{\infty} 
                              \cup \bigcup_{s\not\in t\cdot h(C^*)}[t]C^\infty\right).
\]
We have $\alpha \in L'$ and $L'$ is a Boolean combination of open sets in the alphabetic topology whereas each open set is regular. 
Since there are only finitely many sets of the type $L(s,C)$, 
it suffices to show $L' \subseteq L$. For $C = \emptyset$ we have $L' = [s]$ and hence $L' \subseteq L$. 
Thus, we may assume $C \neq \emptyset$. Let $\beta\in L'$ be an arbitrary element and let $\beta \in [t][f]^\omega$ for a linked pair $(t,f)$. 
Since $\beta$ is in $L'$, it admits a decomposition $\beta = \tilde v \tilde \beta$ with $\tilde v \in [s]$ and 
$\tilde \beta \in C^\omega$. Also, by $\beta \in [t][f]^\omega$, one gets $\beta = v\beta'$ with $v\in [t], \beta' \in [f]^\omega$.
Using $tf = t$ and $C \neq \emptyset$, we may assume that $|v| \geq |\tilde v|$, 
which implies $\beta' \in C^\omega$. Hence we have $t \in s\cdot h(C^*)$. 
By construction $\beta \not\in \bigcup_{s\not\in t\cdot h(C^*)}[t]C^\infty$ and therefore $s \in t\cdot h(C^*)$. It follows 
$s\cdot h(C^*) = t\cdot h(C^*)$. Since $\beta \not\in  \bigcup_{D \subsetneq C} \Gamma^*D^{\infty}$, we also have 
$\alph(\beta') = C$.
Using \ref{boolalphopen:cccc} it follows $\beta \in L$.
%\qed
\end{proof}

The alphabetic topology above is a refinement of the well-known Cantor topology. 
The Cantor topology is given by the basis $u\Gamma^\infty$ for $u\in \Gamma^*$. A regular language $L$ is a Boolean combination of
 open sets in the Cantor topology 
if and only if $[s][e]^\omega \subseteq L \Leftrightarrow [t][f]^\omega \subseteq L$ for all linked pairs $(s,e)$ and $(t,f)$ of the 
syntactical monoid of $L$ with $s \Requiv t$, c.f. \cite{dk11tocs,pp04:short,tho90handbook:short}. \refthm{thm:boolalphopen} is a similar 
result, but one had to consider the alphabetic information of the linked pairs. Hence, one does not have $s \Requiv t$ as condition, but rather 
$\Requiv$-equivalence within a certain alphabet $C$.

\begin{remark}
	The \emph{strict alphabetic topology} over $\Gamma^\infty$, which is introduced in \cite{dk11tocs}, is given by the basis $\set{uA^\infty \cap A^\im}{u\in \Gamma^*, A\subseteq \Gamma}$ and the open sets are of the form $\bigcup_A W_A A^\infty \cap A^\im$ with $W_A\subseteq \Gamma^*$. 
	Reusing the proof of \refthm{thm:boolalphopen} it turns out, that it is equivalent to be a Boolean combination of open sets in the alphabetic topology and in the strictly alphabetic topology. 
	Since $uA^\infty = \bigcup_{B\subseteq A} uA^*B^\infty \cap B^\im$, every open set in the alphabetic topology is also open in the strict alphabetic topology. 
	Further, one can adapt the proof of ``\ref{boolalphopen:bbbb} $\Rightarrow$ \ref{boolalphopen:cccc}'' of  \refthm{thm:boolalphopen} to show 
	that if $L$ is a Boolean combination of open sets in the strict alphabetic topology, 
	then item \ref{boolalphopen:cccc} of  \refthm{thm:boolalphopen} holds. 
\end{remark}

\section{The fragment $\B\Sigmatwo$}
Place and Zeitoun have shown that $\B\Sigmatwo$ is decidable over finite words. In particular, they have shown that given the syntactic homomorphism of a language $L$, it is decidable if $L \in \B\Sigmatwo$. Let $\Vtwo$ be the pseudovariety of monoids which corresponds to the 
language variety of all languages contained in $\B\Sigmatwo$. Since every pseudovariety is generated by its syntactic monoids, 
the result of Place and Zeitoun can be stated as follows:
\begin{theorem}[\cite{PlaceZeitoun14icalp}]\label{thm:pzc}
	The pseudovariety $\Vtwo$ corresponding to the $\B\Sigmatwo$-definable languages in $\Gamma^*$ is decidable. 
\end{theorem}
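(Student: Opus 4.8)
Since \refthm{thm:pzc} is Place and Zeitoun's theorem and is used here only as a black box, the proof is essentially a reference to \cite{PlaceZeitoun14icalp}; what follows is a sketch of the route one would take to establish it. The plan is to turn the property ``$L \in \B\Sigmatwo$'' into a decidable property of the syntactic homomorphism $\alpha : \Gamma^* \to M$ of $L$, using separation as the main tool. Recall first that $\Sigma_2$-membership is already effective (the pseudovariety $\Vthreehalf$ is decidable), but what is really needed is the stronger fact that $\Sigma_2$-\emph{separation} is decidable: given regular languages $K_1,K_2 \subseteq \Gamma^*$, one can decide whether some $\Sigma_2$-definable language contains $K_1$ and is disjoint from $K_2$. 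In the form needed below this yields an algorithm computing, for the homomorphism $\alpha$, the set $\mathcal{S}(\alpha) \subseteq M \times M$ of pairs $(s,t)$ for which $\alpha^{-1}(s)$ is \emph{not} $\Sigma_2$-separable from $\alpha^{-1}(t)$. One obtains $\mathcal{S}(\alpha)$ by a saturation procedure: start from the diagonal, close under componentwise multiplication by elements of $M$, and close under an ``$\oo$-operation'' that reflects the Ramsey/Simon factorizations witnessing $\Sigma_2$-equivalence of long words. Termination is immediate because $M \times M$ is finite.

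Next, define on $M$ the equivalence relation $\approx$ generated by the symmetric relation consisting of those unordered pairs $\{s,t\}$ with $(s,t) \in \mathcal{S}(\alpha)$ \emph{and} $(t,s) \in \mathcal{S}(\alpha)$; both directions must be imposed because $\Sigma_2$ is not closed under complement. The claim to prove is that $L \in \B\Sigmatwo$ if and only if the accepting set $\alpha(L) \subseteq M$ is a union of $\approx$-classes. Given $\mathcal{S}(\alpha)$, this condition is plainly decidable, so \refthm{thm:pzc} reduces to the claim.

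The easy half of the claim is its necessity: if $\alpha(L)$ is not $\approx$-saturated, one chains the $\Sigma_2$-indistinguishability witnesses along an $\approx$-path into a single pair of words $u \in L$, $v \notin L$ satisfying the same $\B\Sigmatwo$-sentences, via an Ehrenfeucht--Fra\"iss\'e argument in which a bounded number of $\Sigma_2$-rounds must be handled with uniform parameters. The difficult half — and the heart of \cite{PlaceZeitoun14icalp} — is sufficiency: from $\approx$-saturation of $\alpha(L)$ one must build an actual $\B\Sigmatwo$-formula for $L$, which amounts to bounding the $\B\Sigmatwo$ quantifier rank needed to distinguish any two words with different $\alpha$-images. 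Through the saturation algorithm this reduces to a quantitative bound on the number of iterations after which $\mathcal{S}(\alpha)$ stabilizes, and matching this algebraic fixpoint with the logical quantifier rank — the precise inductive bookkeeping linking the closure operations on $M$ to games on words — is where the real work, and the main obstacle, lies.
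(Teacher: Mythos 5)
The paper itself offers no proof of \refthm{thm:pzc}: it is quoted from \cite{PlaceZeitoun14icalp} as a black box, the only added content being the remark that the language-level statement (membership in $\B\Sigmatwo$ is decidable from the syntactic homomorphism) can be rephrased as decidability of the pseudovariety $\Vtwo$ because every pseudovariety is generated by its syntactic monoids. So your decision to defer to the citation is exactly what the paper does, and for the purposes of this paper that is sufficient; your sketch of the separation-based machinery (computing, by a fixpoint procedure over $M\times M$, which pairs $(s,t)$ are not $\Sigma_2$-separable) also correctly identifies the engine of Place and Zeitoun's proof.

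However, the membership criterion you build on top of that engine is not theirs, and as stated it does not work. You claim $L\in\B\Sigmatwo$ iff $\alpha(L)$ is a union of classes of the equivalence generated by \emph{mutually} non-$\Sigma_2$-separable pairs, and you call necessity the easy half. But non-separability of $\alpha^{-1}(s)$ from $\alpha^{-1}(t)$ only provides, for each rank $k$, witnesses $u_k,v_k$ with $u_k$ below $v_k$ in the one-directional $\Sigma_2$ preorder of rank $k$, while the reverse non-separability provides a \emph{different} pair of witnesses; these cannot in general be chained into a single pair $u\in L$, $v\notin L$ satisfying the same $\B\Sigmatwo$-sentences, which is what your Ehrenfeucht--Fra\"iss\'e step would require. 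Membership in a Boolean closure is governed by bounded \emph{alternating chains} $w_1\lesssim_k w_2\lesssim_k\cdots\lesssim_k w_m$, in which one and the same word must serve simultaneously as upper and lower witness, and this is strictly stronger than pairwise mutual inseparability; this is precisely why the problem was hard. Accordingly, Place and Zeitoun's actual characterization is not a saturation condition on $\alpha(L)$ but an identity involving $\omega$-powers that the syntactic morphism must satisfy for all $\Sigma_2$-pairs computed by the separation algorithm, and the substantial work lies in proving that identity sufficient. So the citation stands and matches the paper's treatment, but the route you sketch would not, as written, establish the theorem.
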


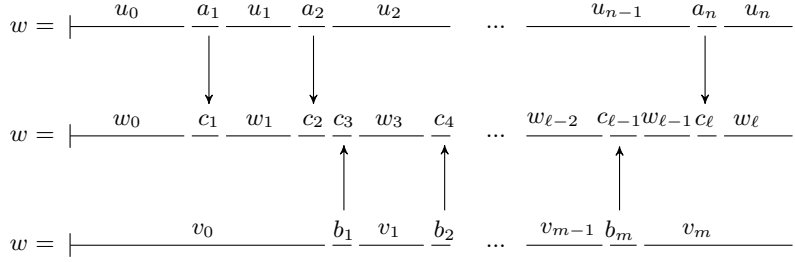
\begin{figure}
	\begin{center}
		\begin{tikzpicture}[>=stealth',scale=1,transform shape,font=\footnotesize]
		\draw[anchor=center] (1.25,1.2)  node{$u_0$} (2.325,1.2) node (a1) {$a_1$}  (3,1.2) node {$u_1$} (3.7,1.2) node (a2) {$a_2$} (4.7,1.2) node {$u_2$} (7.7,1.2) node {$u_{n-1}$} (8.85,1.2) node (an) {$a_n$} (9.55,1.2) node {$u_n$};
		\draw[anchor=center] (0,1)  node{$w=$}  (0.5,1)  node{$|$} -- (2,1) (2.1,1) -- (2.45,1) (2.55,1) -- (3.4,1) (3.5,1) -- (3.85,1) (3.95,1) -- (5.5,1) (5.6,1) (6.1,1) node{...} (6.5,1) -- (8.65,1) (8.75,1) -- (9,1) (9.1,1) -- (10,1) node{} (10.2,1);

		\draw[anchor=center] (1.25,-0.25)  node{$w_0$} (2.325,-0.25) node (c1) {$c_1$} (3,-0.25) node {$w_1$} (3.7,-0.25) node (c2) {$c_2$} (4.1,-0.25) node (c3) {$c_3$} (4.7,-0.25) node {$w_3$}  (5.425,-0.25) node (c4) {$c_4$} (6.85,-0.25) node {$w_{\ell-2}$} (7.725,-0.25) node (cl1) {$c_{\ell-1}$}  (8.35,-0.25) node {$w_{\ell-1}$}(8.85,-0.25) node (cl) {$c_\ell$} (9.4,-0.25) node {$w_\ell$};
		\draw[anchor=center] (0,-0.45)  node{$w=$}  (0.5,-0.45)  node{$|$} -- (2,-0.45) (2.1,-0.45) -- (2.45,-0.45) (2.55,-0.45) -- (3.4,-0.45) (3.5,-0.45) -- (3.85,-0.45) (3.95,-0.45) -- (4.2,-0.45) (4.3,-0.45) -- (5.15,-0.45) (5.25,-0.45) -- (5.5,-0.45) (5.6,-0.45) (6.1,-0.45) node{...} (6.5,-0.45) -- (7.5,-0.45) (7.6,-0.45) -- (7.95,-0.45) (8.05,-0.45) -- (8.65,-0.45) (8.75,-0.45) -- (9,-0.45) (9.1,-0.45) -- (10,-0.45) node{} (10.2,-0.45);
		
		\draw[anchor=center] (2.25,-1.7)  node{$v_0$} (4.1,-1.7) node (b1) {$b_1$} (4.7,-1.7) node {$v_1$}  (5.425,-1.7) node (b2) {$b_2$} (7.05,-1.7) node {$v_{m-1}$} (7.725,-1.7) node (bm) {$b_m$} (8.75,-1.7) node {$v_m$};
		\draw[anchor=center] (0,-1.9)  node{$w=$}  (0.5,-1.9)  node{$|$} -- (3.85,-1.9) (3.95,-1.9) -- (4.2,-1.9) (4.3,-1.9) -- (5.15,-1.9) (5.25,-1.9) -- (5.5,-1.9) (5.6,-1.9) (6.1,-1.9) node{...} (6.5,-1.9) -- (7.5,-1.9) (7.6,-1.9) -- (7.95,-1.9) (8.05,-1.9) -- (10,-1.9) node{} (10.2,-1.9);
		
		\draw[shorten <=0.1cm,->] (a1) to (c1);
		\draw[shorten <=0.1cm,->] (a2) to (c2);		
		\draw[shorten >=0.1cm,->] (b1) to (c3);
		\draw[shorten >=0.1cm,->] (b2) to (c4);
		\draw[shorten >=0.1cm,->] (bm) to (cl1);		
		\draw[shorten <=0.1cm,->] (an) to (cl);
		\end{tikzpicture}
	\end{center}
	\caption{Different factorizations in the proof of \reflem{lem:onemonomial}. In the situation of the figure it holds $C_0 = A_0 \cap B_0$, $C_1 = A_1 \cap B_0$, $C_2 = \emptyset$, $C_3 = A_2 \cap B_1$, $C_{\ell-2} = A_{n-1} \cap B_{m-1}$, $C_{\ell-1} = A_{n-1} \cap B_m$ and $C_\ell = A_n \cap B_m$.}
\end{figure}
The main part of the proof will be \refprop{prp:algebra2monomial}. The following lemma will be an auxiliary result for \refprop{prp:algebra2monomial}.
\begin{lemma}\label{lem:onemonomial}
	There exists a number $l$ such that for every set $\oneset{M_1,\ldots, M_d}$ of $k$-monomials over $\Gamma^*$ and every $w$ with $w \in M_i$ for all $i\in \oneset{1,\ldots,n}$, there exists a $l$-monomial $N$ over $\Gamma^*$ with $w \in N$ and $N \subseteq \cap M_i$.
\end{lemma}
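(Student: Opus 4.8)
The plan is to fix a word $w$ lying in all the $M_i$, merge the monomial factorizations of $w$ that witness $w\in M_i$ for each $i$ into one common refinement, and let $N$ be the monomial read off from that refinement. The first observation is that $l$ need not depend on $d$: since the alphabet $\Gamma$ is fixed, there are only finitely many $k$-monomials over $\Gamma^*$ — each is determined by a nonempty word $a_1\cdots a_n$ with $n\le k$ together with sets $A_0,\dots,A_n\subseteq\Gamma$ — so, discarding duplicates, we may assume $d\le N_0$ for a constant $N_0=N_0(k,|\Gamma|)$, and it will suffice to take $l:=N_0k$.

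Concretely, for each $i$ I would fix a factorization $w = w^{(i)}_0\,a^{(i)}_1\,w^{(i)}_1\cdots a^{(i)}_{n_i}\,w^{(i)}_{n_i}$ with $n_i\le k$ and $\alph(w^{(i)}_r)\subseteq A^{(i)}_r$ witnessing $w\in M_i$; this marks a set $P_i$ of $n_i$ positions of $w$. Let $P = P_1\cup\cdots\cup P_d = \{p_1<\cdots<p_\ell\}$, so $\ell\le\sum_i n_i\le dk\le l$, and let $w = u_0\,c_1\,u_1\cdots c_\ell\,u_\ell$ be the induced factorization, where $c_t$ is the letter of $w$ at $p_t$ and $u_t$ is the (possibly empty) factor between consecutive marked positions. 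Each $u_t$ lies strictly between two consecutive elements of $P$, hence for every $i$ it lies inside a single block $w^{(i)}_{r(t,i)}$ of the $i$-th factorization; put $C_t:=\bigcap_i A^{(i)}_{r(t,i)}$. Then $\alph(u_t)\subseteq C_t$, so $N := C_0^*\,c_1\,C_1^*\cdots c_\ell\,C_\ell^*$ is a monomial of degree $\ell\le l$ containing $w$.

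The remaining task is to show $N\subseteq M_i$ for each fixed $i$. Take $x = x_0\,c_1\,x_1\cdots c_\ell\,x_\ell\in N$ with $\alph(x_t)\subseteq C_t$, and let $t_1<\cdots<t_{n_i}$ be the indices with $p_{t_s}\in P_i$, so that $c_{t_s}=a^{(i)}_s$. Regrouping $x$ along these distinguished cuts gives $x = y_0\,a^{(i)}_1\,y_1\cdots a^{(i)}_{n_i}\,y_{n_i}$, where $y_s$ is the concatenation of the factors $x_t$ with $t_s\le t<t_{s+1}$ and the letters $c_t$ with $t_s<t<t_{s+1}$ (with the evident conventions for $s=0$ and $s=n_i$). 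For such a $t$ one has $r(t,i)=s$, hence $\alph(x_t)\subseteq C_t\subseteq A^{(i)}_s$; and such a $c_t$ is the letter of $w$ at a position lying strictly inside the block $w^{(i)}_s$, hence $c_t\in\alph(w^{(i)}_s)\subseteq A^{(i)}_s$. Thus $\alph(y_s)\subseteq A^{(i)}_s$ and $x\in M_i$; since $i$ was arbitrary, $N\subseteq\bigcap_i M_i$.

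I expect the genuinely delicate points to be exactly two. First, seeing that $l$ can be chosen independently of $d$ — this is really the whole content of the statement, and it is settled by the finiteness of the set of $k$-monomials over the fixed alphabet. Second, the bookkeeping in the last step: when a word of $N$ is regrouped according to the cuts of the $i$-th factorization, the cuts contributed by the \emph{other} factorizations, together with the short factors surrounding them, must all be absorbed into the correct block $A^{(i)}_s$, which works precisely because those cuts sit at positions of $w$ that lie inside $w^{(i)}_s$. The rest — the degree count $\ell\le dk$ and the verification $w\in N$ — is routine.
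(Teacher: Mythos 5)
Your proof is correct and follows essentially the same route as the paper: refine $w$'s factorizations by taking the union of all marked positions, intersect the surrounding alphabets to get $N$, and bound the degree via the finiteness of the set of $k$-monomials over $\Gamma^*$ (the paper's remark gives the same bound $l \le n_k\cdot k$). The only difference is presentational: you merge all $d$ factorizations simultaneously and spell out the regrouping check $N\subseteq M_i$, whereas the paper reduces to $d=2$ and iterates, leaving that verification as ``by construction''.
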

\begin{proof}
	As one can iterate the statement, it suffices to show it for $d=2$. 
	Let $M_1 = A_0^* a_1 A_1^* a_2 \cdots A_{n-1}^*a_n A_n^*$ and 
	$M_2 = B_0^* b_1 B_1^* b_2 \cdots B_{m-1}^*b_m B_m^*$ be two monomials.
	Since $w \in M_1$ and $w \in M_2$, it admits factorizations 
	$w = u_0 a_1 u_1 a_2 \cdots u_{n-1} a_n u_n$ and $w = v_0 b_1 v_1 b_2 \cdots v_{m-1} b_m v_m$ such that 
	$u_i \in A_i^*$ and $v_i \in B_i^*$. 
	The factorizations mark the positions of the $a_i$s and the $b_j$s and pose an alphabetic conditions for the factors inbetween. 
	Thus, there exists a factorization $w = w_0 c_1 w_1 c_2 \cdots w_{\ell-1} c_\ell w_\ell$, 
	such that the positions of $c_i$ are exactly those, that are marked by $a_i$ or $b_j$, i.e., $c_i = a_j$ or $c_i = b_j$ for some $j$. 
	The words $w_i$ are over some alphabet $C_i$ such that $C_i = A_j \cap B_k$ for some $j$ and $k$
	induced by the factorizations.
	In the case of consecutive marked positions, one can set $C_i = \emptyset$. 
	Thus, we obtain a monomial $N = C_0^* c_1 C_1^* c_2 \cdots c_{\ell-1} C_{\ell-1}^* c_\ell C_\ell^*$ with $C_\ell = A_n \cap B_m$. 
	By construction $N \subseteq M_1$, $N \subseteq M_2$ and $w \in N$ holds.
	Since there are only finitely many monomials of degree $k$, the size of the number $l$ is bounded.
%	\qed
\end{proof}
An analysis of the proof of \reflem{lem:onemonomial} yields that the bound $l \leq n_k\cdot k$ holds, 
where $n_k$ is the number of distinct $k$-monomials over $\Gamma^*$. 
Next, we will show that a language which is in $\Vtwo$ and is a Boolean combination of alphabetic open sets is a finite Boolean combination of monomials. 
One ingredrient of the proof will be that by \reflem{lem:onemonomial}, we are able to compress the information of a set of $k$-monomials which contain a fixed word into the information that a single $l$-monomial contains that fixed word.

\begin{proposition}\label{prp:algebra2monomial}
	Let $L\subseteq \Gamma^\infty$ be a Boolean combination of alphabetic open sets such that $\Synt(L) \in \Vtwo$. Then $L$ is a finite Boolean combination of monomials.
\end{proposition}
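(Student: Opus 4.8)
The plan is to use the characterization of $\B\Sigmatwo$-definability in terms of $\equiv_k^\infty$-classes stated in the preliminaries: it suffices to show that $L$ is a union of $\equiv_k^\infty$-classes for some uniform $k$, or equivalently that for every $\alpha \in L$ there is a finite Boolean combination of monomials $K_\alpha$ with $\alpha \in K_\alpha \subseteq L$ (finitely many of these covering $L$, which follows since there are only finitely many $\equiv_k^\infty$-classes once $k$ is fixed). So fix $\alpha \in L$, write $h : \Gamma^* \to M$ for a homomorphism to a finite monoid recognizing $L$ (we may take $M = \Synt(L)$, so $M \in \Vtwo$), and fix a linked pair $(s,e)$ with $\alpha \in [s][e]^\omega \subseteq L$. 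Let $C = \im(\alpha)$ and pick $\hat e \in C^*$ with $\alph(\hat e) = C$, $h(\hat e) = e$.

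First I would build the $\Gamma^*$-side monomial. Since $M \in \Vtwo$, the $\B\Sigmatwo$-characterization over finite words applies to the languages $[s]$ and (more importantly) to the relevant sets $h^{-1}(m)$: each is a union of $\equiv_k$-classes over $\Gamma^*$ for some $k$. Using \reflem{lem:onemonomial}, for the finite prefix that witnesses $\alpha \in [s][e]^\omega$ — say a prefix $u \in [s]$ — I can compress the finitely many $k$-monomials over $\Gamma^*$ containing $u$ that are relevant (those cutting out $[s]$, and those used to describe which $[t]$ satisfy $t \in s\cdot h(C^*)$ and which do not) into a single $l$-monomial $N$ over $\Gamma^*$ with $u \in N$ and $N$ contained in the intersection; the bound $l$ is uniform by the remark after \reflem{lem:onemonomial}. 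The point is that $N \subseteq [s]$, and more: membership in $N$ forces enough algebraic information about the image under $h$ to pin down the $h(C^*)$-coset.

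Next I would glue on the infinite part. Following the construction in the proof of ``\ref{boolalphopen:cccc} $\Rightarrow$ \ref{boolalphopen:aaaa}'' of \refthm{thm:boolalphopen}, consider
\[
K := N C^\infty \setminus \Big( \bigcup_{D \subsetneq C} \Gamma^* D^\infty \Big),
\]
which is a finite Boolean combination of monomials over $\Gamma^\infty$ (the monomials $N C^\infty$ and $\Gamma^* D^\infty$ have bounded degree), and clearly $\alpha \in K$. It remains to check $K \subseteq L$. Take $\beta \in K$, say $\beta \in [t][f]^\omega$ for a linked pair $(t,f)$. From $\beta \in N C^\infty$ and $N \subseteq [s]$ we get a factorization $\beta = \tilde v\tilde\beta$ with $\tilde v \in [s]$, $\tilde\beta \in C^\omega$; combined with $\beta = v\beta'$, $v \in [t]$, $\beta' \in [f]^\omega$, and using $C \neq \es$ and $tf = t$ (the $C=\es$ case giving $K = N \subseteq L$ directly) we may slide the cut so that $\beta' \in C^\omega$, hence $t \in s\cdot h(C^*)$; symmetrically, the part of $N$'s definition that encodes the coset forces $s \in t\cdot h(C^*)$, so $s\cdot h(C^*) = t\cdot h(C^*)$. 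Since $\beta \notin \bigcup_{D\subsetneq C}\Gamma^*D^\infty$ we also have $\im(\beta) = \alph(\beta') = C$, so we can choose $\hat f \in C^*$ with $\alph(\hat f)=C$, $h(\hat f)=f$. Now by \refthm{thm:boolalphopen} (\ref{boolalphopen:bbbb} $\Rightarrow$ \ref{boolalphopen:cccc}), which applies because $L$ is a Boolean combination of alphabetic open sets, we conclude $[t][f]^\omega \subseteq L$, hence $\beta \in L$.

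The main obstacle is the first step: turning the $\B\Sigmatwo$-over-$\Gamma^*$ information (which a priori only tells us that each $h^{-1}(m)$ is some union of $\equiv_k$-classes, i.e.\ a finite union of monomials, not a single monomial) into a single monomial $N$ that simultaneously (i) lies inside $[s]$, (ii) determines the coset $s\cdot h(C^*)$, and (iii) has degree bounded uniformly in $M$ and $L$ only. This is exactly what \reflem{lem:onemonomial} is for — one intersects, for the fixed witnessing prefix $u$, all the finitely many relevant $k$-monomials through $u$ and compresses them — but care is needed to keep the collection of monomials being intersected finite and to verify that membership in the resulting $N$ really propagates the coset information to an arbitrary $\beta \in K$ and not just to $u$. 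Everything else (the infinite gluing and the coset bookkeeping) is a direct reuse of the argument already carried out for \refthm{thm:boolalphopen}.
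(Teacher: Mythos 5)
There is a genuine gap, and it sits exactly where you flag the ``main obstacle''. Your set $K = N C^\infty \setminus \bigcup_{D\subsetneq C}\Gamma^*D^\infty$ is built from \emph{positive} monomial information only: a single $l$-monomial $N$ of uniformly bounded degree through the prefix $u$. Such an $N$ can neither satisfy $N \subseteq [s]$ nor ``pin down the coset $s\cdot h(C^*)$'': since the degree of $N$ is bounded while $u$ is arbitrarily long, some block $A_i^*$ of $N$ must absorb a long factor of $u$, and then $N$ also contains words with strictly larger subword content and hence with other $h$-images and other cosets. An $\equiv_k$-class is a Boolean combination of monomials, i.e.\ it is carved out by monomials containing $u$ \emph{and} monomials not containing $u$; \reflem{lem:onemonomial} compresses only the former, and intersecting monomials through $u$ cannot recover the negative constraints. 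Concretely, the paper's own example kills the construction: for $L=(\oneset{a,b}^*aa\oneset{a,b}^*)^\omega$ take $\alpha=(aab)^\omega$, $C=\oneset{a,b}$; whatever bounded-degree $N$ you choose through a prefix $u$ of $\alpha$, the word $\beta = u(ab)^\omega$ lies in $N C^\infty$, has $\im(\beta)=C$, so $\beta\in K$, yet $\beta\notin L$. This is precisely the failure of the reverse inclusion $s\in t\cdot h(C^*)$, which your sketch attributes to ``the part of $N$'s definition that encodes the coset'' --- but no such part exists, and you cannot import it from the proof of \refthm{thm:boolalphopen} either, because there the coset is encoded by subtracting sets $[t]C^\infty$, and $[t]\in\B\Sigmatwo(\Gamma^*)$ does not make $[t]C^\infty$ a Boolean combination of monomials (concatenation with $C^\infty$ does not commute with the Boolean operations).

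The paper's proof avoids this by never trying to capture $L$ by one positive set around $\alpha$. Instead it proves that $L$ is saturated by $\equiv_\ell^\infty$: it takes $\beta \equiv_\ell^\infty \alpha$, so that \emph{both} positive and negative monomial information about $\beta$ is available, chooses prefixes $u$ of $\alpha$ and $v=\hat v x$ of $\beta$ padded so that $(c_1\cdots c_m)^k$ occurs as a subword of the padding, and then proves the two-sided exchange statement $ux \equiv_k v$: one direction uses that $\hat v$ lies in the compressed monomial $N'$ through $u$, and the other direction uses $\alpha\in M_1M_{21}C^\infty$ (i.e.\ monomials containing $\beta$ also contain $\alpha$) together with the padding to split $M_2=M_{21}M_{22}$ with $M_{21}C^*=M_{21}$. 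Saturation by $\equiv_k$ of the sets $h^{-1}(p)$ (from $\Vtwo$ via \refthm{thm:pzc}) then turns $ux\equiv_k v$ into the coset equality, and only then is \refthm{thm:boolalphopen} invoked. So your first step and your gluing step both need to be replaced by this saturation-plus-exchange argument; the remainder of your outline (choice of $C$, sliding the cut to get $t\in s\cdot h(C^*)$, final appeal to \refthm{thm:boolalphopen}) does match the paper.
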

\begin{proof}
	Let $h : \Gamma^* \to \Synt(L)$ be the syntactic homomorphism of $L$ and consider the languages $h^{-1}(p)$ for $p \in \Synt(L)$. 
	By \refthm{thm:pzc} we obtain $h^{-1}(p) \in \B\Sigmatwo$.
	Thus, there exists a number $k$ such that for every $p\in M$ the language $h^{-1}(p)$ is saturated by $\equiv_k$, i.e., $u \equiv_k v \implies h(u) = h(v)$. 
	By \reflem{lem:onemonomial} there exists a number $\ell$ such that for every set $\oneset{M_1,\ldots, M_n}$ of $k$-monomials and every $w$ with $w \in M_i$ for all $i\in \oneset{1,\ldots,n}$, there exists a $\ell$-monomial $N$ with $w \in N \subseteq \cap_{i=1}^n M_i$.
	Let $\alpha \equiv_\ell^\infty \beta$ and $\alpha \in L$. 
	We show $\beta \in L$ which implies $L = \cup_{\alpha \in L} [\alpha]_\ell^\infty$ and thus that $L$ is a finite Boolean combination of $\ell$-monomials.
	By observing membership in $\Gamma^*C^\infty$, it is clear that $\im(\alpha) = \im(\beta) =: C$.

	Let $u' \leq \alpha$ and $v' \leq \beta$ be prefixes such that for all every $\ell$-monomial $N = N' \cdot C^\infty$ with $\alpha, \beta \in N$ we have that some prefix of $u', v'$ is in $N'$. Further, let $u,v$ be the shortest prefixes of $\alpha,\beta$ such that $u' \leq u$, $v'\leq v$ and for $C = \oneset{c_1,\ldots, c_m}$ the word $(c_1c_2\cdots c_m)^k$ is a subword of $u''$ and $v''$ with $u = u' u''$ and $v = v' v''$, i.e., we extend the words $u'$ and $v'$ such that the full imaginary alphabet appears often enough. 
	Let $\alpha = u \alpha'$ and $\beta = v \beta'$.
	We use \refthm{thm:boolalphopen} and show that for $s = h(u)$ and $t = h(v)$ we have $s\cdot h(C^*) = t \cdot h(C^*)$, which implies $\beta \in L$.
	By symmetry, it suffices to show $t \in sh(C^*)$. Consider the set of $k$-monomials $N_i = N'_i C^\infty$ which hold at $u$, i.e., 
	such that $u\in N'_i$ and $\alpha' \in C^\infty$. 
	By the choice of $\ell$, there exists an $\ell$-monomial $N'$ such that $u \in N'$ and $N' \subseteq \cap_i N'_i$.
	Since $u \in N'$, we obtain $\alpha \in N := N' C^\infty$ and by $\alpha \equiv_\ell^\infty \beta$ the membership $\beta \in N$ holds. 
	By construction of $v$, there exists a prefix $\hat v \leq v' \leq v$ such that $\hat v \in N'$ and $\hat \beta \in C^\infty$ with $\hat \beta$ being defined by $\beta = \hat v \hat \beta$.
	Let $v = \hat v x$, then $x \in C^*$. We show that $ux \equiv_k v$.

		\begin{figure}
			\begin{center}
				\begin{tikzpicture}[>=stealth',scale=1,transform shape,font=\footnotesize]
				\draw[anchor=center] (1.4,0.7) node {$u'\in N'$} (3,1.2) node (a1) {$u$} (4.7,1.2) node {$\alpha'$};
				\draw[anchor=center] (0,1)  node{$\alpha=$}  (0.5,1)  node{$|$} -- (2,1) -- (4,1) node{$|$} -- (5.5,1) -- (8.6,1)  (9,1) node{...} (6.5,1);
				\draw (0.5,0.9) -- (2.5,0.9);
				
				\draw[anchor=center] (1.4,-0.2) node {$\hat v \in N'$} (2.6,-0.3) node {$v'$} (4,-0.4) node (b1) {$v$} (5.2,-0.4) node {$\beta'$} (3,-0.9) node {$x$};
				\draw[anchor=center] (0,-0.6)  node{$\beta=$}  (0.5,-0.6)  node{$|$} -- (2,-0.6) -- (4.5,-0.6) node{$|$} -- (5.5,-0.6) -- (8.6,-0.6)  (9,-0.6) node{...} (6.5,-0.6);
				\draw (0.5,-0.5) -- (3.5,-0.5);
				\draw (0.5,-0.4) -- (2,-0.4);
				\draw (2,-0.7) -- (4.5,-0.7);
				\draw[shorten <=0.00cm, shorten >=0.04cm,decorate,decoration={snake,amplitude=.4mm,segment length=2mm,post length=1mm},->] (1.4,0.5) to node[left] {$\exists$} (1.4,-0.1);			
				%			\draw[shorten <=0.1cm,->] (a1) to (c1);
				%			\draw[shorten <=0.1cm,->] (a2) to (c2);
				\end{tikzpicture}
			\end{center}
			\caption{Factorization of $\alpha$ and $\beta$ in the proof of \refprop{prp:algebra2monomial}}
		\end{figure}
	
	Thus, let $ux \in A_0^* a_1 A_1^* a_2 \cdots A_{n-1}^*
	a_n A_n^*$ where the monomial has degree at most $k$, 
	then there exists a factorization $A_0^* a_1 A_1^* a_2 \cdots A_{n-1}^*a_n A_n^* = M_1 M_2$ with $M_1, M_2$ $k$-monomials such that $u \in M_1$ and $x \in M_2$. By definition of $N'$ we have $u, \hat v \in N' \subseteq M_1$ and thus $\hat v \in M_1$. 
	We conclude that $v = \hat vx \in M_1 M_2 = A_0^* a_1 A_1^* a_2 \cdots A_{n-1}^*a_n A_n^*$.
	
	Let now $v = \hat v x \in A_0^* a_1 A_1^* a_2 \cdots A_{n-1}^*a_n A_n^*$. 
	Again, there exists a factorization of the monomial $A_0^* a_1 A_1^* a_2 \cdots A_{n-1}^*a_n A_n^* = M_1 M_2$ with 
	$M_1, M_2$ $k$-monomials such that $\hat v \in M_1$ and $x \in M_2$.
	Since $(c_1c_2\cdots c_m)^k$ is a subword of $x$, there must be an $A_i$ in $M_2$ such that $C \subseteq A_i$. Thus, there is a factorisation $M_2 = M_{21} M_{22}$ in
	$k$-monomials $M_{21}, M_{22}$ such that $x' \in M_{21}$, $x'' \in M_{22}$ for $x = x' x''$ and 
	we have $M_{21} \cdot C^* = M_{21}$. 
	Consider $\beta = \hat v x \beta' \in M_1 M_{21} \cdot C^\infty$. Since $\alpha \equiv_\ell^\infty \beta$, we obtain $\alpha \in M_1 M_{21} \cdot C^\infty$. Thus, there is some prefix of $u$ in $M_1 M_{21}$ and by $M_{21} \cdot C^* = M_{21}$, we also obtain $ux' \in M_1 M_{21}$. 
	Thus, $ux = ux' \cdot x'' \in M_1M_{21} \cdot M_{22} = M_1M_2 = A_0^* a_1 A_1^* a_2 \cdots A_{n-1}^*a_n A_n^*$ holds.
	We conclude $ux \equiv_k v$ and thus $t = h(v) = h(u)h(x) \in sh(C^*)$.
%	\qed
\end{proof}

The direct product of homomorphisms $g : \Gamma^* \to M$ and $h : \Gamma^* \to N$ is given by $(g\times h) : \Gamma^* \to M\times N, w \mapsto (g(w),h(w))$. It is well-known, that the direct product recognizes Boolean combinations: 
\begin{lemma}\label{lem:directproductsynt}
	Let $L$ and $K$ be languages such that $L$ recognized by $g : A^* \to M$ and $K$ is recognized by $h: A^* \to N$. Then, any Boolean combination of $L$ and $K$ is recognized by $(g \times h)$.
\end{lemma}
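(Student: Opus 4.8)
The plan is to prove \reflem{lem:directproductsynt} by exhibiting $(g\times h)$ as a recognizing homomorphism for any Boolean combination of $L$ and $K$, working directly from the definition of recognizability for languages in $\Gamma^\infty$ given in the Preliminaries. Recall that a language is recognized by a homomorphism $\Gamma^* \to P$ precisely when it is a union of sets $[s][e]^\omega$ over linked pairs $(s,e)$ with $[s][e]^\omega$ meeting the language. So the first step is to observe that $M\times N$ is again a finite monoid, that $(g\times h)$ is a well-defined homomorphism, and to understand its linked pairs: a pair $\bigl((s,t),(e,f)\bigr) \in (M\times N)^2$ is linked iff $(e,f)^2 = (e,f)$ and $(s,t)(e,f) = (s,t)$, which unwinds to $e^2=e$, $f^2=f$, $se=s$ and $tf=t$ — that is, exactly when $(s,e)$ is linked in $M$ and $(t,f)$ is linked in $N$.

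The key point is then a saturation/refinement observation: the equivalence on $\Gamma^\infty$ induced by $(g\times h)$ — grouping words $\alpha$ by which set $[(s,t)][(e,f)]^\omega$ of $(g\times h)$ they lie in — is the common refinement of the equivalences induced by $g$ and by $h$ separately. Concretely, for $\alpha \in \Gamma^\infty$, if $\alpha \in [s_1][e_1]^\omega$ for a linked pair of $M$ and $\alpha \in [t_1][f_1]^\omega$ for a linked pair of $N$, one can merge the two Ramsey factorizations into a single factorization witnessing $\alpha \in [(s_1,t_1)][(e_1,f_1)]^\omega$ for $(g\times h)$; conversely a $(g\times h)$-factorization projects to factorizations over $M$ and over $N$. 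Hence each $(g\times h)$-class is contained in a $g$-class and in an $h$-class, and therefore, since $L$ is a union of $g$-classes and $K$ is a union of $h$-classes, each $(g\times h)$-class lies either entirely inside or entirely outside $L$, and likewise for $K$. It follows immediately that any Boolean combination of $L$ and $K$ is a union of $(g\times h)$-classes, i.e.\ is recognized by $(g\times h)$.

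For the merging of factorizations, I would spell it out as follows: given a linked pair $(s_1,e_1)$ of $M$ with $\alpha\in [s_1][e_1]^\omega$, write $\alpha = u_0 u_1 u_2 \cdots$ with $g(u_0) = s_1$ and $g(u_i) = e_1$ for $i\geq 1$; similarly $\alpha = v_0 v_1 v_2\cdots$ over $N$. Passing to a common refinement of these two factorizations of the same word and then, by the infinite pigeonhole principle (Ramsey, as already invoked in the Preliminaries), to an infinite sub-factorization on which both the $M$-value and the $N$-value of the blocks are constant, yields a factorization $\alpha = w_0 w_1 w_2\cdots$ with $(g\times h)(w_0)$ a fixed element $(s,t)$ and $(g\times h)(w_i) = (e,f)$ idempotent for $i\geq 1$, and with $(s,t)(e,f)=(s,t)$; since the refinement was coarser than the original $g$-factorization, $s$ still lies in the same $g$-class relationship so that $[s][e]^\omega$ and $[s_1][e_1]^\omega$ are the same $g$-recognized piece (they both contain $\alpha$ and $g$-recognition means such pieces are unions of $g$-classes), and likewise on the $N$-side.

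The main obstacle is purely bookkeeping: making the "common refinement, then Ramsey" argument precise without fuss, and being careful that $g$-recognizability of $L$ is used in the form "$L$ is a union of the sets $[s][e]^\omega$", so that containment of a $(g\times h)$-piece in a $g$-piece already forces the all-or-nothing behaviour with respect to $L$. There is no real difficulty beyond this; the statement is called "well-known" for good reason, and the proof is a routine unwinding of the definition of recognizability for $\omega$-languages combined with the elementary fact that products of finite monoids are finite.
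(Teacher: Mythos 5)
Your proposal is correct and follows essentially the same route as the paper: both arguments rest on the fact that every linked-pair set $[(s,t)][(e,f)]^\omega$ of $g\times h$ is contained in the sets $[s][e]^\omega$ of $g$ and $[t][f]^\omega$ of $h$, so saturation of $L$ under $g$ and of $K$ under $h$ forces each product linked-pair set to lie entirely inside or outside $L$ and $K$, hence inside or outside any Boolean combination; the paper merely organizes this as complement plus union. Your explicit merging of the two Ramsey factorizations is redundant (and its claim that the merged factorization is coarser than the original $g$-factorization is not needed), since the Ramsey statement from the Preliminaries already applies directly to the homomorphism $g\times h$ into the finite monoid $M\times N$.
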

\begin{proof}
	Since $L \cap [s][e]^\omega \neq \emptyset$ implies $[s][e]^\omega \subseteq L$ for some linked pair $(s,e)$, we obtain $\overline L = \cup \set{[s][e]^\omega}{{[s][e]^\omega \cap \overline L \neq \emptyset} }$ for the complement of $L$.
	Thus, it suffices to show that $L \cup K$ is recognized by $(g\times h)$. 
	Obviously, $L$ is covered by $[(s,t)][(e,f)]^\omega$, where $(s,e)$ is a linked pair of $M$ with $[s][e]^\omega \subseteq L$ and $(t,f)$ is any linked pair of $N$. Similiarly one can cover $K$ and thus $M\times N$ recognizes $L \cup K$.
	%	\qed
\end{proof}

Next, we show that the algebraic characterisation $\Vtwo$ of $\B\Sigmatwo$ over finite words also holds over finite and infinite words simultaneously.
The proof of this is based on the fact that the algebraic part of the characterisation of $\Sigmatwo$ over finite words and finite and infinite words is the same \cite{dk11tocs}. Since every language of $\Sigmatwo$ is also a language of $\B\Sigmatwo$, and thus $\Vthreehalf \subseteq \Vtwo$, combining this with \reflem{lem:directproductsynt} yields the characterization $\Vtwo$.
\begin{lemma}\label{lem:dd2-B2}
  If $L \subseteq \Gamma^\infty$ is definable in
  $\mathbb{B}\Sigma_2$, then $\Synt(L) \in \Vtwo$.
\end{lemma}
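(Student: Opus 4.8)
The plan is to reduce the claim to the two facts the text has already isolated: (i) the algebraic characterization $\Vthreehalf$ of $\Sigma_2$ is the same over $\Gamma^*$ and over $\Gamma^\infty$ (from \cite{dk11tocs}), and (ii) the direct product of recognizers handles Boolean combinations (\reflem{lem:directproductsynt}). Since $L \in \B\Sigmatwo$, write $L$ as a Boolean combination of languages $L_1,\dots,L_r$, each defined by a $\Sigma_2$-sentence over $\Gamma^\infty$. By the cited result of \cite{dk11tocs}, each $\Synt(L_i)$ lies in $\Vthreehalf$, and since $\Vthreehalf \subseteq \Vtwo$ (ignoring the order), each $\Synt(L_i) \in \Vtwo$.

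Next I would let $h_i : \Gamma^* \to \Synt(L_i)$ be the syntactic homomorphism of $L_i$ and form the direct product $h := h_1 \times \cdots \times h_r : \Gamma^* \to \Synt(L_1) \times \cdots \times \Synt(L_r)$. By an iterated application of \reflem{lem:directproductsynt}, $h$ recognizes every Boolean combination of the $L_i$, in particular it recognizes $L$. Now \reflem{lem:syntminimal} applies: since $L$ is recognized by the homomorphism $h$ into the finite monoid $M := \Synt(L_1) \times \cdots \times \Synt(L_r)$, the syntactic monoid $\Synt(L)$ is a divisor of $M$. (Here we use the non-ordered version of \reflem{lem:syntminimal}, which is the relevant one for languages containing infinite words.)

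Finally I would conclude using closure properties of pseudovarieties. Each factor $\Synt(L_i)$ belongs to $\Vtwo$, so their direct product $M$ belongs to $\Vtwo$ because pseudovarieties are closed under finite direct products; and $\Synt(L)$, being a divisor of $M$, belongs to $\Vtwo$ because pseudovarieties are closed under division. Hence $\Synt(L) \in \Vtwo$, as claimed.

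The only genuinely non-routine ingredient is fact (i): that the pseudovariety attached to $\Sigma_2$ does not change when one passes from finite to infinite words, i.e. that $\Synt(L_i) \in \Vthreehalf$ for a $\Sigma_2$-definable $L_i \subseteq \Gamma^\infty$. This is exactly what is imported from \cite{dk11tocs} (the algebraic half of the characterization of $\Sigma_2$ over $\Gamma^\infty$ is stated in terms of the same $\Vthreehalf$ as over $\Gamma^*$), so no new work is needed here — everything else is a straightforward assembly of \reflem{lem:directproductsynt}, \reflem{lem:syntminimal}, and the standard closure properties of pseudovarieties.
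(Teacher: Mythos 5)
Your proposal is correct and follows essentially the same route as the paper's own proof: decompose $L$ into $\Sigma_2$-definable pieces, import $\Synt(L_i)\in\Vthreehalf\subseteq\Vtwo$ from \cite{dk11tocs}, recognize $L$ by the direct product via \reflem{lem:directproductsynt}, and conclude with \reflem{lem:syntminimal} and the closure of pseudovarieties under products and division. Your write-up merely makes explicit the iteration of the direct-product lemma and the closure properties that the paper leaves implicit.
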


\begin{proof} 
	By definition, $L \in \B\Sigmatwo$ implies that $L$ is a Boolean combination of language $L_i \in \Sigmatwo$. 
	By \cite{dk11tocs} we have $\Synt(L_i) \in \Vthreehalf$ and thus $\Synt(L_i) \in \Vtwo$. 
	Since $L$ is a Boolean combination of $L_i$, $L$ is recognized by the direct product of all $\Synt(L_i)$ by \reflem{lem:directproductsynt}. In particular, $\Synt(L)$ is a divisor of the direct product of $\Synt(L_i)$ by \reflem{lem:syntminimal}. 
	Hence, we obtain $\Synt(L) \in \Vtwo$.
%	\qed
\end{proof}
The proof that monomials are definable in $\Sigmatwo$ is straightforward.
\begin{lemma}\label{lem:mon-sig2}
Let $L\subseteq \Gamma^\infty$ be a monomial of the form $A_0^* a_1 A_1^* a_2 \cdots A_{n-1}^*a_n A_n^\infty$. Then $L$ is definable in 
$\Sigma_2$ by a formula with quantifier depth at most $n + 1$.
\end{lemma}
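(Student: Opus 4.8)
The plan is to write down the straightforward $\exists^*\forall^*$ sentence that guesses the positions of the distinguished letters $a_1,\dots,a_n$ and then checks, with a single universally quantified position, that every position of the word lies in the alphabet block it belongs to. Concretely, for $n \geq 1$ I would take
\[
  \varphi \;=\; \exists x_1 \cdots \exists x_n \; \forall y \; \Bigl(\, \bigland_{i=1}^{n-1} x_i < x_{i+1} \;\land\; \bigland_{i=1}^{n} \lambda(x_i) = a_i \;\land\; \psi \,\Bigr),
\]
where $\psi = \psi(x_1,\dots,x_n,y)$ is the quantifier-free formula
\[
  \bigl(y < x_1 \limplies \lambda(y) \in A_0\bigr) \;\land\; \bigland_{i=1}^{n-1} \bigl(x_i < y \land y < x_{i+1} \limplies \lambda(y) \in A_i\bigr) \;\land\; \bigl(x_n < y \limplies \lambda(y) \in A_n\bigr),
\]
and ``$\lambda(y) \in A$'' abbreviates $\biglor_{a \in A} \lambda(y) = a$; in particular it is $\lfalse$ when $A = \es$, which simply forbids a position strictly between two consecutive markers. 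For $n = 0$ the monomial is $A_0^\infty$ and one takes $\varphi = \forall y\; \lambda(y) \in A_0$. For $n \geq 1$ the sentence $\varphi$ is already in prenex form with quantifier blocks $\exists^n \forall$, hence in $\Sigma_2$; for $n = 0$ it lies in $\Pi_1 \subseteq \Sigma_2$. In both cases the quantifier depth is exactly $n+1$.

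It then remains to verify $L(\varphi) = A_0^* a_1 A_1^* \cdots a_n A_n^\infty$. For the inclusion ``$\supseteq$'', a factorization $\alpha = u_0 a_1 u_1 \cdots a_n \alpha_n$ witnessing membership in the monomial, with $u_i \in A_i^*$ and $\alpha_n \in A_n^\infty$, supplies as witnesses the positions $x_i$ of the displayed letters $a_i$; any position $y$ of $\alpha$ other than the $x_i$ lies strictly inside exactly one of the blocks $u_0,\dots,u_{n-1}$ or inside the suffix $\alpha_n$, and in each case its letter is in the corresponding $A_i$ (or $A_n$), so $\psi$ holds for every $y$. For ``$\subseteq$'', given $\alpha \models \varphi$ pick witnesses $x_1 < \cdots < x_n$, cut $\alpha$ at these positions to get $\alpha = u_0 a_1 u_1 \cdots a_n \alpha_n$, and read off from the universal conjunct $\psi$ that $u_i \in A_i^*$ for $i < n$ and $\alpha_n \in A_n^\infty$; hence $\alpha$ lies in the monomial.

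There is essentially no real obstacle in this argument; the only points that deserve a word of care are the degenerate situations. For $n = 0$ one only has the universal part. An empty alphabet $A_i = \es$ is handled correctly since ``$\lambda(y) \in \es$'' is unsatisfiable and therefore forces $x_i$ and $x_{i+1}$ to be consecutive (and, in the same way, forces an empty prefix or suffix in the cases $A_0 = \es$ or $A_n = \es$), matching $\es^* = \{1\}$ and $\es^\infty = \{1\}$. Finally, the fact that $A_n^\infty$ comprises both finite and infinite words needs no extra work, because the single universal quantifier automatically ranges over exactly the positions the suffix of $\alpha$ actually has.
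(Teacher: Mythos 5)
Your proof is correct and follows essentially the same approach as the paper: the same $\exists^n\forall$ sentence guessing the marked positions and checking the alphabetic constraints with one universal variable. You are in fact slightly more careful than the paper (explicit ordering conjuncts $x_i<x_{i+1}$, the $n=0$ and empty-alphabet cases, and the verification of both inclusions), but the underlying argument is identical.
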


\begin{proof}
A formula which describes exactly the elements of the monomial is
\begin{align*}
\exists x_1\ldots \exists x_n \forall y& : \bigwedge_{i=1}^n \lambda(x_i) = a_i \land 
\bigwedge_{i=1}^{n-1} x_i < y < x_{i+1} \Rightarrow \lambda(y) \in A_i\land\\&
(y > x_n \Rightarrow \lambda(y) \in A_n) \land (y < x_1 \Rightarrow \lambda(y) \in A_0).
\end{align*}
Hence $L$ is definable in $\Sigma_2$.
%\qed
\end{proof}
Combining our results we are ready to state and prove the main theorem of the paper.
\begin{theorem}\label{thm:main}
Let $L \subseteq \Gamma^\infty$ be $\omega$-regular. Then the following are equivalent:
\begin{enumerate}
\item $L$ is a finite Boolean combination of monomials of the form $A_0^* a_1 A_1^* a_2 \cdots A_{n-1}^*a_n A_n^\infty$\label{main:aaaa}.
\item $L$ is definable in $\B \Sigma_2$\label{main:bbbb}.
\item The syntactic homomorphism $h$ of $L$ satiesfies:\label{main:cccc}
 \begin{enumerate}
 \item\label{main:cccc:a} $\Synt(L)\in \Vtwo$ and
 \item for all linked pairs $(s,e),(t,f)$ it holds that if there exists an alphabet $C$ and words $\hat e, \hat f$ with 
$h(\hat e) = e, h(\hat f) = f$, $\alph(\hat e) = \alph(\hat f) = C$ and $s\cdot h(C^*) = t\cdot h(C^*)$, 
then $[s][e]^\omega \subseteq L \iff [t][f]^\omega \subseteq L$.\label{main:cccc:b}
 \end{enumerate}
\end{enumerate}
\end{theorem}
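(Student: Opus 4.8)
The plan is to prove the cycle of implications $\ref{main:aaaa} \Rightarrow \ref{main:bbbb} \Rightarrow \ref{main:cccc} \Rightarrow \ref{main:aaaa}$, reusing the auxiliary results assembled above.

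\medskip

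\noindent\textbf{$\ref{main:aaaa} \Rightarrow \ref{main:bbbb}$.} By \reflem{lem:mon-sig2} every monomial of the form $A_0^* a_1 A_1^* a_2 \cdots A_{n-1}^* a_n A_n^\infty$ is definable in $\Sigmatwo$, hence in $\B\Sigmatwo$. Since $\B\Sigmatwo$ is, by definition, closed under Boolean combinations, any finite Boolean combination of such monomials is again in $\B\Sigmatwo$. This step is immediate.

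\medskip

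\noindent\textbf{$\ref{main:bbbb} \Rightarrow \ref{main:cccc}$.} Suppose $L$ is definable in $\B\Sigmatwo$. Condition \ref{main:cccc:a} is exactly \reflem{lem:dd2-B2}. For condition \ref{main:cccc:b} I would first observe that a language definable in $\B\Sigmatwo$ is a Boolean combination of languages in $\Sigmatwo$, and by the cited characterization of \cite{dk11tocs} (\refthm{thm:boolalphopen}'s preceding theorem) each such $\Sigmatwo$-language is open in the alphabetic topology; hence $L$ itself is a Boolean combination of open sets in the alphabetic topology. Now apply the implication ``\ref{boolalphopen:bbbb} $\Rightarrow$ \ref{boolalphopen:cccc}'' of \refthm{thm:boolalphopen} (using that $h$ is the syntactic homomorphism, which indeed recognizes $L$): this yields precisely condition \ref{main:cccc:b}. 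One small point to check is that $h$ being the syntactic homomorphism is a legitimate choice of recognizing homomorphism for \refthm{thm:boolalphopen}, which it is since $\Synt(L)$ recognizes $L$.

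\medskip

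\noindent\textbf{$\ref{main:cccc} \Rightarrow \ref{main:aaaa}$.} This is the substantial direction and the main obstacle. From \ref{main:cccc:a} and the equivalence ``\ref{boolalphopen:cccc} $\Leftrightarrow$ \ref{boolalphopen:bbbb}'' of \refthm{thm:boolalphopen} applied to the syntactic homomorphism, condition \ref{main:cccc:b} tells us that $L$ is a Boolean combination of open sets in the alphabetic topology. Together with $\Synt(L) \in \Vtwo$ from \ref{main:cccc:a}, we are exactly in the hypothesis of \refprop{prp:algebra2monomial}, which concludes that $L$ is a finite Boolean combination of monomials of the form $A_0^* a_1 A_1^* a_2 \cdots A_{n-1}^* a_n A_n^\infty$. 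Thus the entire weight of this implication is carried by \refprop{prp:algebra2monomial} (whose proof, in turn, rests on \reflem{lem:onemonomial} and \refthm{thm:pzc}); the theorem statement itself only has to package the hypotheses correctly and invoke the equivalences of \refthm{thm:boolalphopen} to pass between the ``topological'' formulation (Boolean combination of alphabetic open sets) and the ``combinatorial on linked pairs'' formulation in \ref{main:cccc:b}. I expect no genuine difficulty here beyond bookkeeping: one must be careful that \ref{main:cccc:b} is stated for the \emph{syntactic} homomorphism $h$, so that the application of \refthm{thm:boolalphopen} is with respect to that same $h$, and that $\omega$-regularity of $L$ (assumed throughout) guarantees $\Synt(L)$ is finite so that \refthm{thm:pzc} applies inside \refprop{prp:algebra2monomial}.

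\medskip

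Finally, I would remark that, combined with \refthm{thm:pzc} (decidability of $\Vtwo$) and \refthm{thm:boolalphopen}\ref{boolalphopen:cccc} (a decidable condition on the finite syntactic monoid), condition \ref{main:cccc} is effectively checkable, so \refthm{thm:main} gives the promised decidability of $\B\Sigmatwo$ over $\Gamma^\infty$.
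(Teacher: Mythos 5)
Your proposal is correct and follows essentially the same route as the paper: Lemma \ref{lem:mon-sig2} for \ref{main:aaaa}$\Rightarrow$\ref{main:bbbb}, Lemma \ref{lem:dd2-B2} plus the openness of $\Sigma_2$-languages and Theorem \ref{thm:boolalphopen} for \ref{main:bbbb}$\Rightarrow$\ref{main:cccc}, and Proposition \ref{prp:algebra2monomial} (via the equivalence in Theorem \ref{thm:boolalphopen} to translate condition \ref{main:cccc:b} back into the topological hypothesis) for \ref{main:cccc}$\Rightarrow$\ref{main:aaaa}. The only cosmetic difference is that you pass through the $\Sigma_2$-characterization theorem of \cite{dk11tocs} where the paper directly uses that $\Sigma_2$-languages are finite unions of monomials, which are open by definition; both are legitimate.
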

\begin{proof}
``\ref{main:aaaa} $\Rightarrow$ \ref{main:bbbb}'':
Since $\B\Sigma_2$ is 
closed under Boolean combinations, it suffices to find a formula in $\Sigma_2$ for the monomials of the form $A_0^* a_1 A_1^* a_2 \cdots A_{n-1}^*a_n A_n^\infty$. 
Hence \reflem{lem:mon-sig2} completes the proof.

``\ref{main:bbbb} $\Rightarrow$ \ref{main:cccc}'': 
\ref{main:cccc:a} is proved by \reflem{lem:dd2-B2}. 
Since $A_0^* a_1 A_1^* a_2 \cdots A_{n-1}^*a_n$ is a set of finite words,  a monomial $A_0^* a_1 A_1^* a_2 \cdots A_{n-1}^*a_n A_n^\infty$ is open in the alphabetic topology by definition. 
 The languages in $\Sigma_2$ are unions of such monomials \cite{dk11tocs} and thus 
 languages in $\B\Sigma_2$ are Boolean combinations of open sets. This implies \ref{main:cccc:b} by \refthm{thm:boolalphopen}.

``\ref{main:cccc} $\Rightarrow$ \ref{main:aaaa}'': 
This is \refprop{prp:algebra2monomial}.
%\qed
\end{proof}

\begin{example}
In this example we show that the topological property is necessary. For this define $L = (\oneset{a,b}^*aa\oneset{a,b}^*)^\omega$. 
We will show that $\Synt(L) \in \Vtwo$, but $L$ is not a Boolean combinations of open sets of the alphabetic topology. 
Computing the syntactical monoid of $L$ yields $\Synt(L) = \oneset{1,a,b,aa,ab,ba}$. 
The equations  $b^2=b$, $xaa = aax = aa$ and $bab = b$ hold in $\Synt(L)$. 
In particular, $(ab)^2 = ab$ and $(aa)^2 = aa$. Thus, $(s,e) = (aa,aa)$ and $(t,f)=(aa,ab)$ are linked pairs. 
Let $h$ denote the syntactic homomorphism of $L$. 
Choosing $aab$ as a preimage for $aa\in \Synt(L)$ yields the alphabetical condition $\alph(aab) = \alph(ab) = C$ on the idempotents.
 Since $s=t$, we trivially have $s\cdot h(C^*) = t\cdot h(C^*)$. 
However, $[aa][ab]^\omega \cap L = \emptyset$ but $[aa]^\omega \subseteq L$. Thus, $L$ does not satisfy the topological condition.
It remains to check $\Synt(L) \in\Vtwo$. It is enough to show that the preimages are in $\B\Sigmatwo$.
\begin{multicols}{3}
	\begin{itemize}
		\item $[1] = 1$
		\item $[a] = a$
		\item $[b] = b^+ \cup (b^+ab^+)^+$
		\item $[ab] = (ab)^+$
		\item $[ba] = (ba)^+$
		\item $[aa] = \oneset{a,b}^*aa\oneset{a,b}^*$
	\end{itemize}
\end{multicols}
One can find $\B\Sigmatwo$ formulas for these languages, e.g., $[ab] = L(\varphi)$ with \begin{align*}
\varphi \,\equiv\; &(\exists x \forall y\!: x \leq y \land \lambda(x) = a) \land (\exists x \forall y\!: x \geq y \land \lambda(x) = b) \land \\&(\forall x\forall y\!: x\geq y \lor (\exists z: x < z < y) \lor (\lambda(x) \neq \lambda(y))
\end{align*}
and thus $\Synt(L) \in \Vtwo$.
\end{example}

\section{Summary and Open Problems}

The alphabetic topology is an essential ingredient in the study of the fragment $\Sigmatwo$. 
Thus, in order to study Boolean combinations of $\Sigmatwo$ formulas, i.e., the fragment $\B\Sigmatwo$ over infinite words, 
we looked closely at properties of Boolean combinations of its open sets. 
It turns out, that it is decidable whether a regular language is a Boolean combination of open sets. 
This does not follow immediately from the decidability of the open sets.
We used linked pairs of the syntactic homomorphism (which are effectively computable) to get decidability of the topological condition. 
Combining this result with the decidability of $\Vtwo$ we obtained an effective characterization of $\B\Sigmatwo$ over $\Gamma^\infty$, the finite and infinite words over the alphabet~$\Gamma$.

In this paper we dealt with $\B\Sigmatwo$, which is the second level of the Straubing-Th\'erien hierarchy. 
Another well-known hierarchy is the dot-depth hierarchy. 
On the level of logic, the difference between the Straubing-Th\'erien hierarchy and the dot-depth hierarchy is that formulas for the dot-depth hierarchy may also use the successor predicate. 
A deep result of Straubing is that over finite words each level of the Straubing-Th\'erien hierarchy is decidable if and only if it is decidable in the dot-depth hierarchy \cite{str85jpaa}. 
Thus, the decidability result for $\B\Sigmatwo$ by Place and Zeitoun also yields a decidability result of $\B\Sigmatwo[<,{+}1]$. 
The fragment $\Sigmatwo[<,{+}1]$ is decidable for $\omega$-regular languages \cite{kkl11stacs:short}. 
This result also uses topological ideas, namely the factor topology. The open sets in this topology describe which factors of a certain length $k$ may appear in the ``infinite part'' of the words. 
The study of Boolean combinations of open sets in the factor topology is an interesting line of future work, and it may yield a decidability result for $\B\Sigmatwo[<,{+}1]$ over infinite words.

Another interesting class of predicates are modular predicates. In \cite{KufleitnerWalter2015rairo} the authors have studied $\Sigmatwo[<,\mathrm{MOD}]$ over finite words. 
The results of \cite{KufleitnerWalter2015rairo} can be generalised to infinite words by adapting the alphabetic topology to the modular setting. As for successor predicates, we believe that an appropriate effective characterization of this topology might help in deciding 
$\B\Sigmatwo[<,\mathrm{MOD}]$ over infinite words. To the best of our knowledge however, modular predicates have not yet been considered over infinite words.

\bibliographystyle{plain}
\newcommand{\Ju}{Ju}\newcommand{\Ph}{Ph}\newcommand{\Th}{Th}\newcommand{\Ch}{Ch}\newcommand{\Yu}{Yu}\newcommand{\Zh}{Zh}\newcommand{\St}{St}\newcommand{\curlybraces}[1]{\{#1\}}

\end{document}